\def\bbra#1{\mathinner{\langle\!\langle{#1}|}}
\def\kett#1{\mathinner{|{#1}\rangle\!\rangle}}
\def\bbrakett#1{\mathinner{\langle\!\langle{#1}\rangle\!\rangle}}
\newtheorem{dfn}{Definition}
\newtheorem{lmm}{Lemma}
\newtheorem{thm}{Theorem}
\newtheorem{cor}{Corollary}
\DeclareMathOperator{\Tr}{Tr}
\DeclareMathOperator{\openone}{\mathds{1}}
\DeclareMathOperator{\conv}{\operatorname{conv}}
\begin{document}

\title{Tight conic  approximation of testing  regions for
  quantum statistical models and measurements}

\author{Michele Dall'Arno}

\email{michele.dallarno.mv@tut.jp}

\affiliation{Department of Computer Science and Engineering,
  Toyohashi University of Technology, Japan}

\affiliation{Yukawa Institute for Theoretical Physics, Kyoto
  University, Japan}

\orcid{0000-0001-7442-4832}

\author{Francesco Buscemi}

\email{buscemi@i.nagoya-u.ac.jp}

\affiliation{Graduate  School  of  Informatics,  Nagoya
  University, Chikusa-ku, 464-8601 Nagoya, Japan}

\orcid{0000-0001-9741-0628}

\maketitle

\begin{abstract}
  Quantum statistical  models (i.e., families  of normalized
  density matrices) and quantum measurements (i.e., positive
  operator-valued measures) can be  regarded as linear maps:
  the former, mapping  the space of effects to  the space of
  probability distributions;  the latter, mapping  the space
  of states  to the space of  probability distributions. The
  images of such linear maps  are called the testing regions
  of the corresponding model or measurement. Testing regions
  are notoriously  impractical to treat analytically  in the
  quantum case. Our  first result is to  provide an implicit
  outer  approximation of  the testing  region of  any given
  quantum  statistical model  or measurement  in any  finite
  dimension:  namely, a  region  in  probability space  that
  contains  the desired  image, but  is defined  implicitly,
  using a  formula that depends  only on the given  model or
  measurement.  The outer approximation that we construct is
  \emph{minimal}  among all  such outer  approximations, and
  \emph{close},   in  the   sense   that   it  becomes   the
  \emph{maximal  inner}  approximation   up  to  a  constant
  scaling  factor.   Finally,  we  apply  our  approximation
  formulas  to characterize,  in  a semi-device  independent
  way,  the ability  to  transform  one quantum  statistical
  model or measurement into another.
\end{abstract}

\section{Introduction}

In   statistics,   information  theory,   and   mathematical
economics one is  often faced with the  problem of comparing
two  setups in  terms of  their expected  performances on  a
particular task of interest.  For example, one might compare
two statistical models by comparing their informativeness in
a given parameter estimation  problem, or two noisy channels
with respect  to a given  communication figure of  merit, or
again two portfolios with  respect to their expected utility
in a  given betting scenario.  The comparison could  also be
extended,  so to  ask when  a given  setup is  \emph{always}
better than another one, i.e., independent of any particular
task  at   hand.  Such  ``global''   comparisons,  generally
described by a preorder relation, play a crucial role in the
formulation of mathematical statistics.

The simplest example  of one such preorder  in statistics is
given  by the  \emph{majorization  preorder} of  probability
distributions~\cite{lorenz1905methods,hardy1952inequalities,Arnold87,MarshallOlkin}. Generalizing
this, we find  the comparison of families  comprising two or
more  probability  distributions.  The   case  of  pairs  of
probability distributions (i.e., \emph{dichotomies}) is also
known                   as                   \emph{relative}
majorization~\cite{blackwell1953,blackwell-girshick-theory-games,torgersen_1991,renes-relative-submajor},
whereas the case of multiple elements is usually referred to
as   comparison   of   statistical   \emph{experiments}   or
\emph{models}~\cite{blackwell1953,blackwell-girshick-theory-games,torgersen_1991,liese-miescke}.

The relevance  of such  preorder relations is  epitomized by
Blackwell's
theorem~\cite{blackwell1953,blackwell-girshick-theory-games},
which   establishes  the   equivalence  between   the  above
mentioned statistical  comparisons, and  the existence  of a
suitable  stochastic  map  that transforms  one  setup  (the
``always better'' one) into  the other (the ``always worse''
one). For this reason,  Blackwell's theorem and its variants
provide   a   powerful   framework  for   general   resource
theories~\cite{Chitambar-Gour2019resource-theories},     and
indeed    recent   quantum    extensions   of    Blackwell's
theorem~\cite{buscemi-CMP-2012,jencova-comparison-2015,buscemi2017quantum}
have  found fruitful  application  in the  study of  quantum
entanglement~\cite{buscemi2012all},                  quantum
thermodynamics~\cite{buscemi-2015-fully-quantum-second-laws,Gour:2018aa-jennings-buscemi-2018},
and                    quantum                   measurement
theory~\cite{buscemi2020complete,buscemi-2022-unifying-instrument-incompatibility,buscemi2023sharpness},
for example.

Mathematically, equivalence theorems \emph{\`a la} Blackwell
start   from  the   characterization  of   suitably  defined
\emph{testing  regions},  corresponding to  the  statistical
models at hand. In the simplest scenario, the testing region
of  a   statistical  model   $\{\rho_i:1\le  i\le   n\}$  is
constructed as follows: for any effect $0\le\pi\le\openone$,
one computes  the $n$-dimensional  real vector  whose $i$-th
component is $\Tr[\pi\;\rho_i]$; the  collection of all such
vectors, for  varying effect  $\pi$, constitute  the testing
region of $\{\rho_i\}_i$\footnote{The  definition of testing
region can be straightforwardly extended also to families of
effects  $\boldsymbol{\pi}=\{\pi_i:1\le i\le  n\}$. In  this
case  the  region  in  $\mathbb{R}^n$  to  consider  is  the
collection  of   vectors  whose  components  are   given  by
$\Tr[\pi_i\;\rho]$,  for varying  $\rho$ in  the set  of all
states.}.   In   other  words,  the  testing   region  of  a
statistical model is  the \emph{image} of the  set of effect
through  the linear  map  induced by  the  former. For  this
reason,  in what  follows we  will use  the terms  ``testing
region''  and  ``image'' interchangeably.   Two  statistical
models with the same number of elements can then be compared
by looking at their testing regions. A particularly relevant
condition occurs when the  testing region of one statistical
model  contains that  of  the  other one.   In  the case  of
dichotomies,  the  inclusion  relation for  testing  regions
corresponds   exactly   with   the  preorder   of   relative
majorization~\cite{renes-relative-submajor,buscemi2017quantum}.

Unfortunately,   due  to   the   non-commutativity  of   the
underlying  algebra,  the  quantum  version  of  Blackwell's
equivalence~\cite{buscemi-CMP-2012}  turns  out to  be  more
convoluted than  its original classical variant.  One reason
for this is that  testing regions quickly become impractical
to  treat analytically\footnote{Another  reason is  that the
requirement   of  \emph{complete   positivity}  demands   an
extended   comparison~\cite{buscemi-CMP-2012}.}.   This   is
particularly  evident  already  in   the  case  of  relative
majorization: while  classical relative majorization  can be
summarized  in  a  finite collection  of  easily  computable
inequalities~\cite{blackwell1953,renes-relative-submajor},
in  the  quantum  case   (with  the  notable  exceptions  of
qubits~\cite{alberti1982states,DallArno2020extensionofalberti})
an   infinite  number   of  scalar   inequalities  must   be
evaluated~\cite{buscemi2017quantum}.  The  situation becomes
even  more cumbersome  in  the case  of quantum  statistical
models~\cite{buscemi-CMP-2012}.

In this paper, in order to  shed more light on the structure
of  quantum  testing  regions,   we  provide  techniques  to
construct implicit  approximations of the testing  region of
arbitrary  quantum statistical  models and  measurements, in
any  finite dimension.  More precisely,  we construct  conic
regions   in   probability   space   that   contain   (outer
approximations), or are contained (inner approximations) by,
the desired testing region.  Such approximations, unlike the
testing region,  can be defined implicitly,  using a formula
that  depends  only  on   the  given  setup  (i.e.,  quantum
statistical model  or measurement). The  approximations that
we   construct    are   \emph{optimal}   among    all   such
approximations, that is, we prove  that they are the minimal
outer and  the maximal inner conic  approximations. They are
moreover \emph{close},  in the sense that  the minimal outer
approximation becomes the maximal  inner approximation up to
a constant scaling factor. Our approximation techniques thus
generalize    the    bounding     recently    provided    in
Ref.~\cite{xu2023bounding}  by  Xu, Schwonnek,  and  Winter:
first,  the extension  is  from Pauli  strings to  arbitrary
measurements; second, the optimization  is not restricted to
the   radius  of   fixed-axis  ellipsoids,   but  it   is  a
\emph{global} optimization  over all  the parameters  of the
ellispoid.  As an application,  we utilize our approximation
formulas to characterize, in  a semi-device independent way,
the  ability  to  transform  one  quantum  measurement  into
another, or one quantum statistical model into another.

\section{Main Results}

\subsection{Quantum measurements}

Given     a      $d$-dimensional     quantum     measurement
$\boldsymbol{\pi}=\{\pi_i:1\le  i\le   n\}$,  $\pi_i\ge  0$,
$\sum_i\pi_i=\openone$, its testing region is defined as the
image  $\boldsymbol{\pi}  (  \mathbb{S}_d   )$  of  the  set
$\mathbb{S}_d$    of    $d$-dimensional    states    through
$\boldsymbol{\pi}$.   By  definition,   this  is   given  in
parametric form,  that is, it  is a body in  the probability
space parameterized by states  in the state space.  Ideally,
one would  aim at implicitizing  it, that is, writing  it in
the form  $f(p) \le  1$, for probability  distributions $p$.
However,  due to  intractability  of the  strucuture of  the
state  space,   we  resort   here  to   providing  inclusion
conditions in terms of implicit bodies.

\begin{dfn}
  \label{def:image}
  For    any   $d$-dimensional,    $n$-outcome   measurement
  $\boldsymbol{\pi}=\{\pi_i\}_{i=1}^n$, we define the family
  $\{   \mathcal{E}_r  (   \boldsymbol{\pi}   )  \}_{r   \in
    \mathbb{R}}$ of hyper-ellipsoids given by:
  \begin{align*}
    \mathcal{E}_r \left( \boldsymbol{\pi} \right) := \left\{
    \mathbf{p}  \in   \boldsymbol{\pi}  \left(  \mathbb{C}^d
    \right)  \Big|  \left|  \sqrt{Q^+} \left(  \mathbf{p}  -
    \mathbf{t} \right) \right|_2^2 \le \frac1{r^2} \right\},
  \end{align*}
  where  $Q \in  \mathbb{R}^{n \times  n}$ is  the symmetric
  positive semi-definite covariance matrix given by
  \begin{align*}
    Q_{ij}  =  \frac{d-1}d   \left(  \Tr\left[  \pi_i  \pi_j
      \right] -  \frac{ \Tr\left[  \pi_i \right]  \Tr[ \pi_j
    ]}d \right),
  \end{align*}
  for  any  $0  \le  i,   j  \le  n$,  and  $\mathbf{t}  \in
  \mathbb{R}^n$ is the vector
  \begin{align*}
    t_i = \frac1d  \Tr \left[ \pi_i \right], \qquad  1 \le i
    \le n.
  \end{align*}
\end{dfn}

\begin{thm}
  \label{thm:image}
  For   any  $d$-dimensional,   $n$-outcome  informationally
  complete  measurement  $\boldsymbol{\pi}$,  one  has  that
  $\mathcal{E}_{d -  1} ( \boldsymbol{\pi})$ is  the maximum
  volume   ellipsoid   enclosed   in   $\boldsymbol{\pi}   (
  \mathbb{S}_d )$  and $\mathcal{E}_1 (  \boldsymbol{\pi} )$
  is    the     minimum    volume     ellipsoid    enclosing
  $\boldsymbol{\pi} ( \mathbb{S}_d )$.
\end{thm}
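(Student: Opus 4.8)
The plan is to transport, through the affine map induced by $\boldsymbol{\pi}$, the explicit description of the John ellipsoid (maximum-volume inscribed) and the L\"owner ellipsoid (minimum-volume enclosing) of the state space $\mathbb{S}_d$ itself. So I would first establish that, in the Hilbert--Schmidt geometry, both extremal ellipsoids of $\mathbb{S}_d$ are balls centred at the maximally mixed state $\openone/d$, with radii $r_{\mathrm{in}}=1/\sqrt{d(d-1)}$ and $r_{\mathrm{out}}=\sqrt{(d-1)/d}$, and then compute the $\boldsymbol{\pi}$-images of these two balls and match them with $\mathcal{E}_{d-1}(\boldsymbol{\pi})$ and $\mathcal{E}_1(\boldsymbol{\pi})$.

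For the state space the two bounding balls are easy to pin down. Every state obeys $\Tr[(\rho-\openone/d)^2]=\Tr[\rho^2]-1/d\le 1-1/d=r_{\mathrm{out}}^2$ with equality on pure states, so $\mathbb{S}_d$ is enclosed in $B(\openone/d,r_{\mathrm{out}})$ and in no smaller concentric ball; conversely, if $\rho$ is trace-one Hermitian with $\Tr[(\rho-\openone/d)^2]\le r_{\mathrm{in}}^2$ and $-a<0$ is its most negative eigenvalue, then the remaining $d-1$ eigenvalues of $\rho-\openone/d$ sum to $a$, whence Cauchy--Schwarz gives $r_{\mathrm{in}}^2\ge a^2+a^2/(d-1)$, i.e.\ $a\le 1/d$, so $\rho\ge 0$; thus $B(\openone/d,r_{\mathrm{in}})\subseteq\mathbb{S}_d$ and, touching the boundary, it is the largest concentric ball inside. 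To upgrade ``largest/smallest ball'' to ``John/L\"owner ellipsoid'' I would use uniqueness of these ellipsoids together with the invariance of $\mathbb{S}_d$ under the conjugation action $\rho\mapsto U\rho U^\dagger$ of $U(d)$: each extremal ellipsoid is then $U(d)$-invariant, so its centre is the unique conjugation-fixed state $\openone/d$, and the positive quadratic form defining it is conjugation-invariant on the traceless Hermitian operators, hence --- the adjoint representation of $SU(d)$ being irreducible --- proportional to $X\mapsto\Tr[X^2]$. Therefore the John and L\"owner ellipsoids of $\mathbb{S}_d$ are precisely $B(\openone/d,r_{\mathrm{in}})$ and $B(\openone/d,r_{\mathrm{out}})$.

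Now I would push this forward. Informational completeness says that $\{\pi_i\}_i$ spans the Hermitian operators, so $\rho\mapsto(\Tr[\pi_i\rho])_i$ is injective on Hermitian operators and its restriction to the trace-one affine subspace is an affine isomorphism onto the affine hull of $\boldsymbol{\pi}(\mathbb{S}_d)$. Affine isomorphisms preserve volume ratios, hence send the John and L\"owner ellipsoids of $\mathbb{S}_d$ to those of $\boldsymbol{\pi}(\mathbb{S}_d)$. It remains to compute the two images. Writing $\rho=\openone/d+X$ with $X$ traceless Hermitian we get $\mathbf{p}-\mathbf{t}=(\Tr[\pi_i X])_i=:L(X)$, where $L$ has adjoint $L^{*}z=\sum_i z_i(\pi_i-\Tr[\pi_i]\openone/d)$; since the support function of $L(\{X:\Tr[X^2]\le s^2\})$ in direction $z$ is $s\,\|L^{*}z\|_2=s\sqrt{z^{\mathsf T}LL^{*}z}$, that image is the ellipsoid centred at $\mathbf{t}$ with shape matrix $LL^{*}$, i.e.\ $\{\mathbf{p}:(\mathbf{p}-\mathbf{t})^{\mathsf T}(LL^{*})^{+}(\mathbf{p}-\mathbf{t})\le s^2\}$. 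A direct computation gives $(LL^{*})_{ij}=\Tr[\pi_i\pi_j]-\Tr[\pi_i]\Tr[\pi_j]/d=\frac{d}{d-1}Q_{ij}$ (and $\mathrm{rng}(Q^{+})=\mathrm{rng}(L)\subseteq\boldsymbol{\pi}(\mathbb{C}^d)$, so the constraint $\mathbf{p}\in\boldsymbol{\pi}(\mathbb{C}^d)$ in Definition~\ref{def:image} is automatic), hence this image equals $\{\mathbf{p}:|\sqrt{Q^{+}}(\mathbf{p}-\mathbf{t})|_2^2\le \frac{d}{d-1}s^2\}=\mathcal{E}_r(\boldsymbol{\pi})$ with $r=\frac1s\sqrt{(d-1)/d}$. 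Setting $s=r_{\mathrm{in}}$ gives $r=d-1$ and $s=r_{\mathrm{out}}$ gives $r=1$, which is exactly the claim.

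I expect the only non-routine step to be the passage from ``extremal ball'' to ``extremal ellipsoid'' for $\mathbb{S}_d$: one must combine uniqueness of the John/L\"owner ellipsoids, the fact that conjugations are volume-ratio-preserving affine symmetries of $\mathbb{S}_d$, and irreducibility of the adjoint representation to force the invariant quadratic form to be Hilbert--Schmidt, and pair this with the eigenvalue estimate certifying $B(\openone/d,r_{\mathrm{in}})\subseteq\mathbb{S}_d$. Everything else --- injectivity of the induced affine map under informational completeness, the identity $LL^{*}=\frac{d}{d-1}Q$, and the bookkeeping of radii --- is routine linear algebra.
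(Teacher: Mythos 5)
Your proposal is correct, and while it shares the paper's skeleton --- show that the extremal ellipsoids of $\mathbb{S}_d$ are the concentric Hilbert--Schmidt balls of radii $1/\sqrt{d(d-1)}$ and $\sqrt{(d-1)/d}$, then transport them through $\boldsymbol{\pi}$ --- the key lemma is proved by a genuinely different mechanism. The paper obtains extremality of the two balls from the contact-point characterizations of the L\"owner and John ellipsoids (Theorem~2.11 of Ref.~\cite{CKOPW14} and Theorem~[J] of Ref.~\cite{Ball:1992aa}), verifying the required isotropic condition by exhibiting scalable frames built from finite $2$-designs of pure states and, for the inner ellipsoid, from normalized rank-$(d-1)$ projectors via the identity $(\openone-\ket{\phi}\!\bra{\phi})/(d-1)-\openone/d \propto -(\ket{\phi}\!\bra{\phi}-\openone/d)$. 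You instead combine uniqueness of the John/L\"owner ellipsoids with the unitary-conjugation symmetry of $\mathbb{S}_d$ and irreducibility of the adjoint representation to force both extremal ellipsoids to be balls centred at $\openone/d$, fixing the radii by elementary eigenvalue estimates; this avoids any appeal to designs or to John-type decomposition theorems (beyond uniqueness), and it makes explicit two steps the paper leaves largely implicit: that informational completeness yields an affine isomorphism of the trace-one hyperplane onto the affine hull of the image, so extremal ellipsoids (volumes taken relative to that hull) push forward, and that the image of a Hilbert--Schmidt ball of radius $s$ equals $\mathcal{E}_r(\boldsymbol{\pi})$ with $r=\sqrt{(d-1)/d}\,/s$, which you derive via support functions and the identity $LL^{*}=\tfrac{d}{d-1}Q$ in place of the pseudo-inverse computation of Lemma~\ref{lmm:implicitization}. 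What the paper's frame-theoretic route buys is a template that carries over directly to the effect-cone setting of Theorem~\ref{thm:image2}, where the relevant sections are trace-$k$ balls and the same design-based frames do the work; your symmetry argument is tied to the full unitary orbit structure of $\mathbb{S}_d$, but for Theorem~\ref{thm:image} itself it is the more self-contained and elementary proof, and the minor bookkeeping issue about the constraint $\mathbf{p}\in\boldsymbol{\pi}(\mathbb{C}^d)$ is inherited from Definition~\ref{def:image} rather than introduced by you.
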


If  measurement  $\boldsymbol{\pi}$ is  not  informationally
complete,    ellipsoids     $\mathcal{E}_{d    -     1}    (
\boldsymbol{\pi})$  and $\mathcal{E}_1  ( \boldsymbol{\pi})$
still    are    inner    and   outer    approximations    of
$\boldsymbol{\pi}   (   \mathbb{S}_d    )$,   although   not
necessarily maximal and minimal in volume, respectively.

We   postpone  the   proof  of   Theorem~\ref{thm:image}  to
Section~\ref{sect:proof}.

As  examples,  let  us consider  symmetric,  informationally
complete   (SIC)   and   mutually   unbiased   basis   (MUB)
measurements.

A $d$-dimensional  measurement $\boldsymbol{\pi}$ is  SIC if
and  only  if  it  has  $n =  d^2$  effects  satisfying  the
condition $\Tr  \pi_i \pi_j = (  d \delta_{i,j} + 1)  / (d^2
(d+1))$. By explicit computation one has
\begin{align*}
  Q  =   \frac{d-1}{d^2  \left(   d  +  1   \right)}  \left(
  \openone_{d^2}   -   \hat{\mathbf{u}}   \hat{\mathbf{u}}^T
  \right).
\end{align*}
As expected, $Q$ is a $d^2 \times d^2$ matrix of rank $d^2 -
1$,      and      it      is     proportional      to      a
projector~\cite{Slomczynski2020morphophoricpovms}.       Its
pseudo-inverse is then given by
\begin{align*}
  Q^+  =  \frac{d^2  \left(   d  +  1  \right)}{d-1}  \left(
  \openone_{d^2}   -   \hat{\mathbf{u}}   \hat{\mathbf{u}}^T
  \right).
\end{align*}

A  $d$-dimensional   measurement  $\boldsymbol{\pi}$   is  a
complete MUB if and  only if it has $n = d  (d + 1)$ effects
satisfying  the  condition  $\Tr  \pi_{i, j}  \pi_{k,  l}  =
(\delta_{i, k} \delta{j, l} + (1  - \delta_{i, k})/d) / (d +
1)^2$, where indices $i, k$ denote the basis and indices $j,
l$  denote  the  effect   within  the  basis.   By  explicit
copmputation one has
\begin{align*}
  Q  =  \frac{d  -  1}{d  \left( d  +  1  \right)^2}  \left(
  \openone_{d \left( d  + 1 \right)} -  \oplus_{i = 1}^{d+1}
  \hat{\mathbf{u}}_d^i \hat{\mathbf{u}}_d^{i T} \right),
\end{align*}
where  $\mathbf{u}_d^i$  is the  vector  with  ones for  the
entries corresponding  to basis  $i$ and zero  otherwise. As
expected, $Q$ is  a $d (d+1) \times d (d+1)$  matrix of rank
$d^2    -    1$,   and    it    is    proportional   to    a
projector~\cite{Slomczynski2020morphophoricpovms}.       Its
pseudo-inverse is then given by
\begin{align*}
  Q^+  = \frac{d  \left(  d  + 1  \right)^2}{d  - 1}  \left(
  \openone_{d \left( d  + 1 \right)} -  \oplus_{i = 1}^{d+1}
  \hat{\mathbf{u}}_d^i \hat{\mathbf{u}}_d^{i T} \right).
\end{align*}

Now that we  have a close approximation of the  image of the
set of  states through  any given  measurement, we  turn our
attention to applying it to semi-device independent tests of
simulability.  A test is  semi-device independent if it only
assumes the dimension of the  devices involved, but does not
otherwise assume their mathematical description. We say that
a      $d_1$-dimensional,       $n$-outcome      measurement
$\boldsymbol{\pi}_1$    simulates    a    $d_0$-dimensional,
$n$-outcome measurement $\boldsymbol{\pi}_0$  if and only if
there  exists  a  completely  positive  map  $\mathcal{C}  :
\mathcal{L}    (    \mathbb{C}^{d_0})    \to\mathcal{L}    (
\mathbb{C}^{d_1} )$ such that
\begin{align}
  \label{eq:simulation}
  \boldsymbol{\pi}_1 \circ \mathcal{C} = \boldsymbol{\pi}_0.
\end{align}

The   following  corollary   generalizes   Corollary  2   of
Ref.~\cite{DallArno2020extensionofalberti} to  the arbitrary
dimensional case,  providing a semi-device  independent test
of   Eq.~\eqref{eq:simulation}.

\begin{cor}[Semi-device independent simulability test]
  Given  a  set  $\mathcal{P}$  of  $n$-element  probability
  distributions generated by  a $d_1$-dimensional (otherwise
  unspecified)  measurement  $\boldsymbol{\pi}_1$,  for  any
  $d_0$   and   for    any   $d_0$-dimensional   $n$-outcome
  measurement $\boldsymbol{\pi}_0$ such that
  \begin{align*}
    \mathcal{E}_1    \left(    \boldsymbol{\pi}_0    \right)
    \subseteq \conv \mathcal{P},
  \end{align*}
  there exists a trace  preserving map $\mathcal{C}$ that is
  positive on the support  of $\boldsymbol{\pi}_0$ such that
  Eq.~\eqref{eq:simulation} holds.  Moreover, if $D = 2$, $n
  \le   3$,   and   $d   \le  3$,   map   $\mathcal{C}$   in
  Eq.~\eqref{eq:simulation} is completely positive, that is,
  measurement  $\boldsymbol{\pi}_1$   simulates  measurement
  $\boldsymbol{\pi}_0$.
\end{cor}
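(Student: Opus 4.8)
The plan is to first reduce the hypothesis to an inclusion of testing regions, then realise that inclusion by a trace-preserving map, and finally upgrade the map to a completely positive one in the stated low-dimensional regime. Write $\boldsymbol{\pi}_0 = \{\pi^{(0)}_i\}_{i=1}^n$ and $\boldsymbol{\pi}_1 = \{\pi^{(1)}_i\}_{i=1}^n$.

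\emph{Step 1 (geometric reduction).} I would first establish $\boldsymbol{\pi}_0(\mathbb{S}_{d_0}) \subseteq \boldsymbol{\pi}_1(\mathbb{S}_{d_1})$. By Theorem~\ref{thm:image} (or, if $\boldsymbol{\pi}_0$ is not informationally complete, by the weaker statement quoted immediately after it), the ellipsoid $\mathcal{E}_1(\boldsymbol{\pi}_0)$ encloses $\boldsymbol{\pi}_0(\mathbb{S}_{d_0})$. On the other hand $\mathcal{P} \subseteq \boldsymbol{\pi}_1(\mathbb{S}_{d_1})$ because $\mathcal{P}$ is generated by $\boldsymbol{\pi}_1$, and $\boldsymbol{\pi}_1(\mathbb{S}_{d_1})$ is convex as the linear image of a convex set, whence $\conv\mathcal{P} \subseteq \boldsymbol{\pi}_1(\mathbb{S}_{d_1})$. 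Chaining these two inclusions through the hypothesis $\mathcal{E}_1(\boldsymbol{\pi}_0) \subseteq \conv\mathcal{P}$ yields the claimed inclusion of testing regions.

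\emph{Step 2 (from the inclusion to the map).} Taking adjoints, the identity $\boldsymbol{\pi}_1 \circ \mathcal{C} = \boldsymbol{\pi}_0$ is equivalent to the existence of a unital linear map $\Phi := \mathcal{C}^\dagger$ on $\mathcal{L}(\mathbb{C}^{d_1})$ with $\Phi(\pi^{(1)}_i) = \pi^{(0)}_i$ for every $i$; trace preservation of $\mathcal{C}$ corresponds exactly to unitality of $\Phi$, consistently with $\sum_i \pi^{(1)}_i = \openone$ being sent to $\sum_i \pi^{(0)}_i = \openone$. On the operator system $\mathrm{span}\{\pi^{(1)}_i\}$ (which contains $\openone$ because $\boldsymbol{\pi}_1$ is a POVM) I would define $\Phi$ by this prescription and verify it is well defined and positive there, using the single observation that any real vector $\mathbf{c}\in\mathbb{R}^n$ with $\sum_i c_i \pi^{(1)}_i = 0$ (respectively $\sum_i c_i \pi^{(1)}_i \ge 0$) defines a linear functional that vanishes (respectively is non-negative) on $\boldsymbol{\pi}_1(\mathbb{S}_{d_1})$, hence \emph{a fortiori} on $\boldsymbol{\pi}_0(\mathbb{S}_{d_0})$ by Step 1, and this in turn forces $\sum_i c_i \pi^{(0)}_i = 0$ (respectively $\sum_i c_i \pi^{(0)}_i \ge 0$). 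It then remains to extend $\Phi$ from the operator system to all of $\mathcal{L}(\mathbb{C}^{d_1})$, retaining unitality and the positivity just established, and to dualise back: the resulting $\mathcal{C}$ is trace preserving, positive on the support of $\boldsymbol{\pi}_0$, and obeys Eq.~\eqref{eq:simulation}. I expect this last extension step — keeping precise track of which positivity cone is preserved — to be the main technical obstacle, and it is exactly what forces the statement to restrict positivity to the support of $\boldsymbol{\pi}_0$ rather than claim it unconditionally.

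\emph{Step 3 (complete positivity in low dimension).} The final sentence concerns replacing the positive trace-preserving $\mathcal{C}$ by a completely positive one subject to the same linear constraints. The trace-preserving solutions of $\boldsymbol{\pi}_1 \circ \mathcal{C} = \boldsymbol{\pi}_0$ form an affine family, and I would argue that in the stated regime (dimensions at most $2$ and $3$, and $n \le 3$) this family necessarily meets the completely positive cone: for $d_0 = d_1 = 2$ this is precisely Corollary~2 of Ref.~\cite{DallArno2020extensionofalberti}, of which the present statement is the announced generalisation, while the remaining low-dimensional cases follow from the decomposability of positive maps in the $2\times2$ and $2\times3$ settings combined with the explicit parametrisation of the testing region available there. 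The only additional point to check is that the completely positive representative still reproduces $\boldsymbol{\pi}_0$ through $\boldsymbol{\pi}_1$, which is again guaranteed by the inclusion established in Step 1.
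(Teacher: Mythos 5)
Your Step 1 is exactly the reduction the paper makes: Theorem~\ref{thm:image} gives $\boldsymbol{\pi}_0(\mathbb{S}_{d_0})\subseteq\mathcal{E}_1(\boldsymbol{\pi}_0)\subseteq\conv\mathcal{P}\subseteq\boldsymbol{\pi}_1(\mathbb{S}_{d_1})$, and that part is correct. The genuine gap is in Step 2. Passing from this inclusion of testing regions to a trace-preserving map $\mathcal{C}$ that is positive on the support of $\boldsymbol{\pi}_0$ is not routine bookkeeping: your construction produces a unital map $\Phi$ that is well defined and positive on the operator system $S=\mathrm{span}\{\pi^{(1)}_i\}$, but a unital positive map defined on an operator system need not admit \emph{any} positive extension to the full matrix algebra $\mathcal{L}(\mathbb{C}^{d_1})$ (there is no analogue of Arveson's extension theorem for merely positive maps), and the positivity of $\mathcal{C}=\tilde\Phi^\dagger$ on positive inputs supported where $\boldsymbol{\pi}_0$ lives dualizes to a positivity requirement on the extension $\tilde\Phi$ against the whole positive cone of $\mathcal{L}(\mathbb{C}^{d_1})$, which positivity on $S$ alone does not supply. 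You explicitly flag this extension as ``the main technical obstacle'' and leave it open, but it is precisely the nontrivial content of the step: the paper does not reprove it, it invokes Proposition~7.1 of Ref.~\cite{buscemi-2005-clean-POVMs}, which is exactly the statement converting inclusion of statistics into the existence of such a trace-preserving, support-positive map. As written, the first claim of the corollary is therefore not established; you should either cite that result, as the paper does, or actually carry out the extension with the positivity bookkeeping.

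For the second part, the paper again argues by citation: Theorem~\ref{thm:image} together with Theorem~2 of Ref.~\cite{DallArno2020extensionofalberti}. Your appeal to Corollary~2 of that reference for $d_0=d_1=2$, supplemented by decomposability of positive maps in the $2\times 2$ and $2\times 3$ settings, is in the right spirit (decomposability is indeed the standard engine behind such claims), but it remains a sketch and does not by itself cover the full regime $D=2$, $n\le 3$, $d\le 3$ announced in the statement; to close this part you should invoke Theorem~2 of Ref.~\cite{DallArno2020extensionofalberti} directly, which is what the paper relies on.
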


\begin{proof}
  The   first   part   of   the   statement   follows   from
  Theorem~\ref{thm:image}   and   from  Proposition~7.1   of
  Ref.~\cite{buscemi-2005-clean-POVMs}.  The  second part of
  the  statement  follows from  Theorem~\ref{thm:image}  and
  from               Theorem              2               of
  Ref.~\cite{DallArno2020extensionofalberti}.
\end{proof}

\subsection{Quantum statistical models}

Given   a   $d$-dimensional    quantum   statistical   model
$\boldsymbol{\rho}=\{\rho_i:1\le  i\le n\}$,  $\rho_i\ge 0$,
$\Tr[\rho_i]=1$, its testing region  is defined as the image
$\boldsymbol{\rho} ( \mathbb{E} )$  of the cone $\mathbb{E}$
of   effects   through   $\boldsymbol{\rho}$,  seen   as   a
classical-quantum  (c-q for  short) channel.  By definition,
the  testing region  $\boldsymbol{\rho} (  \mathbb{E} )$  is
given  in parametric  form, that  is, it  is a  body in  the
probability  space parameterized  by effects  in the  effect
space. Ideally, one would aim  at implicitizing it, that is,
write  it  in  the  form   $f(q)  \le  1$,  for  vectors  of
probabilities $q$. However, due to the intractability of the
structure of the  effect space, we resort  here to providing
inclusion conditions in terms of implicit bodies.

\begin{dfn}
  \label{def:image2}
  For          any           $d$-dimensional          family
  $\boldsymbol{\rho}=\{\rho_i\}_{i=1}^n$ of  $n$ states, let
  $\{\mathcal{E}_r^k  (  \boldsymbol{\rho})  \}^{k=0,  \dots
    d}_{r  \in  \mathbb{R}}$  be  the  following  family  of
  hyper-ellipsoids:
  \begin{align*}
    \mathcal{E}_r^k   \left(  \boldsymbol{\rho}   \right)  =
    \left\{   \mathbf{q}    \in   \boldsymbol{\rho}   \left(
    \mathbb{C}^d  \right) \Big|  \left| \sqrt{Q_k^+}  \left(
    \mathbf{q}  - \frac{k}d  \mathbf{u} \right)  \right|_2^2
    \le \frac1{r^2} \right\},
  \end{align*}
  where $Q_k  \in \mathbb{R}^{n \times n}$  is the symmetric
  positive semi-definite covariance matrix given by
  \begin{align*}
    \left( Q_k \right)_{ij} = \left( k - \frac{k^2}d \right)
    \left(  \Tr\left[   \rho_i  \rho_j  \right]   -  \frac1d
    \right),
  \end{align*}
  for  any  $0  \le  i,   j  \le  n$,  and  $\mathbf{u}  \in
  \mathbb{R}^n$ is the vector with all unit entries.
\end{dfn}

We introduce a $d$-cone as a generalization of the bicone. A
$d$-cone in $\mathbb{R}^n$ is the  convex hull of the origin
and $d$  arbitrary $(n-1)$-balls with aligned  centers lying
on hyperplanes  orthogonal to the  line of the  centers. Let
$r(x)$ be  the radius of the  ball at distance $x$  from the
origin  and $L$  be the  distance of  the furthest  ball. If
$r(x)$ is  symmetric, that is $r(x)  = r(L - x)$  for any $0
\le x  \le L$, then we  say that the $d$-cone  is symmetric.
The usual bicone is recovered  as the symmetric $2$-cone.  A
pictorial   representation   of   $d$-cones  is   given   in
Fig.~\ref{fig:dcones}. An  elliptical $d$-cone is  the image
of a $d$-cone through a linear transformation that preserves
the line joining the centers of the balls.
\begin{figure}[h!]
  \begin{center}
    \includegraphics[width=\columnwidth]{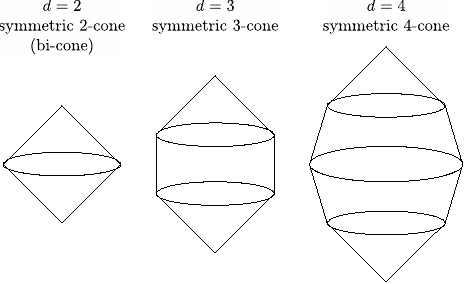}
  \end{center}
  \caption{A pictorial representation of symmetric $d$-cones
    in $\mathbb{R}^3$, for $d = 2, 3, 4$.}
  \label{fig:dcones}
\end{figure}

\begin{thm}
  \label{thm:image2}
  For   any  $d$-dimensional,   $n$-outcome  informationally
  complete  family $\boldsymbol{\rho}$  of  states, one  has
  that   $\conv  \cup_{k=0}^d   \mathcal{E}_{d   -  1}^k   (
  \boldsymbol{\rho})$  is  the   maximum  volume  elliptical
  $d$-cone  enclosed in  $\boldsymbol{\rho} (  \mathbb{E} )$
  and     $\conv      \cup_{k=0}^d     \mathcal{E}_1^k     (
  \boldsymbol{\rho})$  is  the   minimum  volume  elliptical
  $d$-cone enclosing $\boldsymbol{\rho} ( \mathbb{E} )$.
\end{thm}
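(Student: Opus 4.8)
The plan is to reduce Theorem~\ref{thm:image2} to Theorem~\ref{thm:image} by slicing the effect cone $\mathbb{E}$ into level sets of the trace functional. First I would recall that the effect cone is $\mathbb{E} = \{ \pi : 0 \le \pi \le \openone \}$, and observe that $\mathbb{E} = \conv \left( \{0\} \cup \bigcup_{k=0}^d \mathbb{E}^{(k)} \right)$, where $\mathbb{E}^{(k)} := \{ \pi \in \mathbb{E} : \Tr \pi = k \}$ is the slice of effects with trace exactly $k$; more precisely, the extreme points of $\mathbb{E}$ are projectors of each rank $k = 0, 1, \dots, d$, so every effect is a convex combination of the zero effect and rescaled rank-$k$ projectors. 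Since $\boldsymbol{\rho}$ acts linearly, the image satisfies $\boldsymbol{\rho}(\mathbb{E}) = \conv \left( \{0\} \cup \bigcup_{k=0}^d \boldsymbol{\rho}(\mathbb{E}^{(k)}) \right)$, and the $k=0$ slice contributes only the origin, consistent with the convex-hull-with-origin structure of a $d$-cone.

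Next I would identify each slice $\boldsymbol{\rho}(\mathbb{E}^{(k)})$ with a rescaled copy of a measurement testing region. The key point is that $\mathbb{E}^{(k)} = \frac{k}{d}\,\mathbb{E}^{(d)} + (\text{shift})$ is not literally true, but one can renormalize: the set $\frac1k \mathbb{E}^{(k)}$ of trace-one ``sub-effects'' is (for $k \le d$) exactly the set of states, and the map $\pi \mapsto (\Tr[\rho_i \pi])_i$ restricted there is a statistical-model testing region. Substituting $\pi = k\sigma$ with $\sigma$ a state and comparing with Definition~\ref{def:image}, one checks that the covariance matrix governing the $k$-slice is $(Q_k)_{ij}$ exactly as defined, with $(k - k^2/d)$ playing the role of $\frac{d-1}{d}$ (indeed for $k = d-1$ they coincide up to the obvious factor, and the quadratic $k - k^2/d$ is the correct ``variance'' weight coming from $\Tr[(k\sigma)^2] - (\Tr k\sigma)^2/d$ averaged over states), and the center is $\frac kd \mathbf{u}$, matching Definition~\ref{def:image2}. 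Thus Theorem~\ref{thm:image} applied slice-by-slice gives that $\mathcal{E}_{d-1}^k(\boldsymbol{\rho})$ is the maximal-volume ellipsoid inside $\boldsymbol{\rho}(\mathbb{E}^{(k)})$ and $\mathcal{E}_1^k(\boldsymbol{\rho})$ the minimal-volume one enclosing it, each living in the hyperplane $\{\mathbf{q} : \mathbf{u}^T \mathbf{q} = k\}$ (which is why the centers $\frac kd \mathbf{u}$ are aligned along the line $\mathbb{R}\mathbf{u}$ and the slices are orthogonal to it — exactly the $d$-cone geometry).

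Finally I would assemble the pieces: taking convex hulls commutes with the slicing, so $\conv \bigcup_{k=0}^d \mathcal{E}_{d-1}^k(\boldsymbol{\rho}) \subseteq \boldsymbol{\rho}(\mathbb{E})$ and $\boldsymbol{\rho}(\mathbb{E}) \subseteq \conv \bigcup_{k=0}^d \mathcal{E}_1^k(\boldsymbol{\rho})$, and by construction these are elliptical $d$-cones in the sense defined above (the common linear map $\sqrt{Q_k}$ must be checked to be compatible across $k$, i.e.\ to be a single linear transformation fixing the $\mathbf{u}$-axis — this follows because informational completeness forces all the $Q_k$ to share the same range, namely $\mathbf{u}^\perp$, so they differ only by the scalar $k - k^2/d$ and a fixed shape matrix). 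For optimality, I would argue that any elliptical $d$-cone contained in (resp.\ containing) $\boldsymbol{\rho}(\mathbb{E})$ must, slice by slice, be contained in (resp.\ contain) each $\boldsymbol{\rho}(\mathbb{E}^{(k)})$, hence by the maximality/minimality clause of Theorem~\ref{thm:image} its $k$-th cross-section is dominated by $\mathcal{E}_{d-1}^k$ (resp.\ dominates $\mathcal{E}_1^k$), and since volume of a $d$-cone is monotone in its cross-sections, the claimed $d$-cones are volume-optimal. The main obstacle I anticipate is the last optimality step: a general elliptical $d$-cone need not have its cross-sections aligned with the trace hyperplanes in an obvious way, so I would need to show that the optimal $d$-cone can be taken with cross-sections exactly at $\Tr = k$ — this should follow from the fact that $\boldsymbol{\rho}(\mathbb{E})$ itself has flat ``ends'' only at those trace values and a John-ellipsoid-type uniqueness argument forces alignment, but making this rigorous in the degenerate (non-informationally-complete) directions is the delicate part, which is presumably why the theorem is stated only for the informationally complete case.
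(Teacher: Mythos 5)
There is a genuine gap at the heart of your reduction. The rescaled trace-$k$ slice $\tfrac1k\{\pi: 0\le\pi\le\openone,\ \Tr\pi=k\}=\{\sigma\ge 0:\Tr\sigma=1,\ \sigma\le\openone/k\}$ is \emph{not} the state space for $1<k<d$: its extreme points are rank-$k$ projectors divided by $k$, not pure states. Your own consistency check exposes the mismatch: if the slice really were $k\,\mathbb{S}_d$, the variance weight would be $\Tr[(k\sigma)^2]-(\Tr[k\sigma])^2/d=k^2(d-1)/d$ for pure $\sigma$, whereas the weight $k-k^2/d=k(d-k)/d$ of Definition~\ref{def:image2} is $\Tr[P^2]-(\Tr[P])^2/d$ for a rank-$k$ projector $P$; the two agree only at $k=1$. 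Consequently Theorem~\ref{thm:image} cannot be invoked slice-by-slice, and the conclusion you draw from it is in fact false for intermediate $k$: for $d=4$, $k=2$ the largest ball centred at $\openone/2$ inside $\conv\{\text{rank-}2\text{ projectors}\}$ has radius $1/\sqrt3$ times the circumradius (minimize $\lambda_1(A)+\lambda_2(A)$ over traceless unit-norm $A$), not $1/(d-1)=1/3$, so $\mathcal{E}_{d-1}^k(\boldsymbol{\rho})$ is \emph{not} the maximum-volume ellipsoid inscribed in the image of the $k$-slice. The paper does not reduce to Theorem~\ref{thm:image}: it decomposes $\mathbb{E}=\conv\cup_{k=0}^d\{\pi:\Tr\pi=k,\ \pi^2=\pi\}$ and re-proves the slice-wise facts directly for rank-$k$ projectors, namely (i) Lemma~\ref{lmm:implicitization2}, which implicitizes $\boldsymbol{\rho}(\mathbb{B}_d^k)=\mathcal{E}_1^k(\boldsymbol{\rho})$ (this, not a renormalization to states, is what produces $Q_k$ and the centre $\tfrac kd\mathbf{u}$), and (ii) the existence of finite scalable frames of trace-$k$ projectors (via $2$-designs) combined with Theorem~2.11 of Ref.~\cite{CKOPW14}, identifying $\mathbb{B}_d^k$ as the minimum-volume ellipsoid enclosing the trace-$k$ projectors, with the inner bound handled along the lines of the proof of Theorem~\ref{thm:image} (Ball's theorem) rather than by quoting its statement.

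Your final assembly step is also not sound as stated. The cross-sections of an elliptical $d$-cone are images of balls under a \emph{single} linear map, hence homothetic, so they cannot be optimized independently: ``each cross-section is optimal for its slice, and volume is monotone in cross-sections'' does not yield volume-optimality of the $d$-cone, and, as noted above, slice-wise John optimality would force cross-sections strictly larger than $\mathcal{E}_{d-1}^k$ for intermediate $k$, i.e.\ it would contradict the very statement you are proving. A smaller inaccuracy: the ellipsoids $\mathcal{E}_1^k(\boldsymbol{\rho})$ lie in the hyperplanes $\{\mathbf{q}:\mathbf{u}^T\mathbf{q}=k\}$ only if $\sum_i\rho_i\propto\openone$, since $\sum_i q_i=\Tr[(\sum_i\rho_i)\pi]$ is not determined by $\Tr\pi$ in general; the aligned-centre structure comes from the centres $\tfrac kd\mathbf{u}$ and the common range of the $Q_k$, not from trace hyperplanes in probability space.
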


If   family    $\boldsymbol{\rho}$   of   states    is   not
informationally   complete,    elliptical   d-cones   $\conv
\cup_{k=0}^d \mathcal{E}_{d - 1}^k ( \boldsymbol{\rho})$ and
$\conv  \cup_{k=0}^d  \mathcal{E}_1^k (  \boldsymbol{\rho})$
still    are    inner    and   outer    approximations    of
$\boldsymbol{\rho} ( \mathbb{E} )$, although not necessarily
maximal and minimal in volume, respectively.

We  postpone   the  proof  of   Theorem~\ref{thm:image2}  to
Section~\ref{sect:proof2}.

As  examples,  let  us consider  symmetric,  informationally
complete   (SIC)   and   mutually   unbiased   basis   (MUB)
families of states.

A  $d$-dimensional family  $\boldsymbol{\rho}$ of  states is
SIC if  and only if it  has $n = d^2$  states satisfying the
condition  $\Tr \rho_i  \rho_j =  (  d \delta_{i,j}  + 1)  /
(d+1)$. By explicit computation one has
\begin{align*}
  Q_k =  \frac{k d-  k^2}{d^2 \left( d  + 1  \right)} \left(
  \openone_{d^2}   -   \hat{\mathbf{u}}   \hat{\mathbf{u}}^T
  \right).
\end{align*}
As expected, $Q_k$ are $d^2  \times d^2$ matrix of rank $d^2
- 1$,  and  they are  proportional  to  a projector.   Their
pseudo-inverses are then given by
\begin{align*}
  Q^+_k = \frac{d^2 \left( d +  1 \right)}{k d - k^2} \left(
  \openone_{d^2}   -   \hat{\mathbf{u}}   \hat{\mathbf{u}}^T
  \right).
\end{align*}

A $d$-dimensional family $\boldsymbol{\rho}$  of states is a
complete MUB if  and only if it  has $n = d (d  + 1)$ states
satisfying  the condition  $\Tr  \rho_{i, j}  \rho_{k, l}  =
(\delta_{i, k} \delta{j, l} + (1 - \delta_{i, k})/d)$, where
indices $i,  k$ denote the  basis and indices $j,  l$ denote
the effect  within the basis.  By  explicit copmputation one
has
\begin{align*}
  Q_k =  \frac{k d - k^2}{d  \left( d + 1  \right)^2} \left(
  \openone_{d \left( d  + 1 \right)} -  \oplus_{i = 1}^{d+1}
  \hat{\mathbf{u}}_d^i \hat{\mathbf{u}}_d^{i T} \right),
\end{align*}
where  $\mathbf{u}_d^i$  is the  vector  with  ones for  the
entries corresponding  to basis  $i$ and zero  otherwise. As
expected, $Q_k$ are $d (d+1) \times d (d+1)$ matrces of rank
$d^2 - 1$, and they  are proportional to a projector.  Their
pseudo-inverses are then given by
\begin{align*}
  Q^+_k = \frac{d \left( d +  1 \right)^2}{k d - k^2} \left(
  \openone_{d \left( d  + 1 \right)} -  \oplus_{i = 1}^{d+1}
  \hat{\mathbf{u}}_d^i \hat{\mathbf{u}}_d^{i T} \right).
\end{align*}

Now that we  have a close approximation of the  image of the
set of effects  through any given family of  states, we turn
our  attention to  applying  it  to semi-device  independent
tests  of simulability.   We say  that a  $d_1$-dimensional,
$n$-outcome family of states $\boldsymbol{\rho}$ simulates a
$d_0$-dimensional,          $n$-outcome          measurement
$\boldsymbol{\rho}_0$  if   and  only  if  there   exists  a
completely positive trace preserving map (a quantum channel)
$\mathcal{C}    :   \mathcal{L}    (   \mathbb{C}^{d_1}    )
\to\mathcal{L} ( \mathbb{C}^{d_0} )$ such that
\begin{align}
  \label{eq:simulation2}
  \mathcal{C}       \circ        \boldsymbol{\rho}_1       =
  \boldsymbol{\rho}_0.
\end{align}
The   following  corollary   generalizes   Corollary  1   of
Ref.~\cite{DallArno2020extensionofalberti} to  the arbitrary
dimensional case,  providing a semi-device  independent test
of Eq.~\eqref{eq:simulation2}.

\begin{cor}[Semi-device independent simulability test]
  Given  a  set  $\mathcal{Q}$  of  $n$-element  vectors  of
  probabilities generated by  a $d_1$-dimensional (otherwise
  unspecified) family  of $n$  states $\boldsymbol{\rho}_1$,
  for any $d_0$ and for  any $d_0$-dimensional family of $n$
  states $\boldsymbol{\rho}_0$ such that
  \begin{align*}
    \conv      \cup_{k=0}^d      \mathcal{E}_1^k      \left(
    \boldsymbol{\rho}_0 \right) \subseteq \conv \mathcal{Q},
  \end{align*}
  there  exists a  (not  necessarily  trace preserving)  map
  $\mathcal{C}$  that   is  positive   on  the   support  of
  $\boldsymbol{\rho}_0$ such  that Eq.~\eqref{eq:simulation}
  holds.  Moreover, if  $D = 2$, $n  = 2$, and $d  = 2$, map
  $\mathcal{C}$  in Eq.~\eqref{eq:simulation}  is completely
  positive    trace    preserving,     that    is,    family
  $\boldsymbol{\rho}_1$    of   states    simulates   family
  $\boldsymbol{\rho}_0$ of states.
\end{cor}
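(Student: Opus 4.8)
The plan is to mirror the structure of the measurement-simulability corollary, transporting Theorem~\ref{thm:image2} into a Blackwell-type statement about c-q channels. First I would recall that, for a family of states $\boldsymbol{\rho}_0$ seen as a classical-quantum channel, Eq.~\eqref{eq:simulation2} asks for a channel $\mathcal{C}$ composing on the \emph{output} side; dualizing, this is equivalent to the inclusion of testing regions $\boldsymbol{\rho}_0(\mathbb{E}) \subseteq \boldsymbol{\rho}_1(\mathbb{E})$ whenever the intertwining map is merely required to be positive on the relevant support (this is the quantum Blackwell/Alberti--Uhlmann mechanism, exactly as invoked in Corollary~2 of Ref.~\cite{DallArno2020extensionofalberti} but now on the c-q rather than q-c side). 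The first step is therefore to observe that $\conv\mathcal{Q}$, being the convex hull of a set of probability vectors generated by $\boldsymbol{\rho}_1$, is contained in $\boldsymbol{\rho}_1(\mathbb{E})$ (the latter is convex because $\mathbb{E}$ is convex and $\boldsymbol{\rho}_1$ linear), so the hypothesis $\conv\cup_{k=0}^{d}\mathcal{E}_1^k(\boldsymbol{\rho}_0)\subseteq\conv\mathcal{Q}$ chains with Theorem~\ref{thm:image2} --- which says $\conv\cup_{k=0}^{d}\mathcal{E}_1^k(\boldsymbol{\rho}_0)$ \emph{encloses} $\boldsymbol{\rho}_0(\mathbb{E})$ --- to yield $\boldsymbol{\rho}_0(\mathbb{E})\subseteq\boldsymbol{\rho}_1(\mathbb{E})$.

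The second step is to convert this region inclusion into the existence of the intertwiner. Here I would apply the c-q version of the equivalence theorem (the analogue of Proposition~7.1 of Ref.~\cite{buscemi-2005-clean-POVMs} used in the measurement case, or directly Corollary~1 of Ref.~\cite{DallArno2020extensionofalberti}): inclusion of the effect-testing regions is equivalent to the existence of a map $\mathcal{C}:\mathcal{L}(\mathbb{C}^{d_1})\to\mathcal{L}(\mathbb{C}^{d_0})$, positive on $\supp\boldsymbol{\rho}_0$ (here, the support of the output states of $\boldsymbol{\rho}_0$, equivalently the range of the c-q channel), satisfying $\mathcal{C}\circ\boldsymbol{\rho}_1=\boldsymbol{\rho}_0$. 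Note the map need not be trace preserving: trace preservation of $\mathcal{C}$ would force $\Tr\rho_i$ to be preserved, but since all states are normalized this is automatic only when $\mathcal{C}$ acts on genuine density matrices; on the effect cone the natural normalization is lost, which is exactly why the statement weakens ``trace preserving'' to ``positive on the support'' --- this parallels the asymmetry already visible between the two measurement/model corollaries.

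The third step handles the sharp low-dimensional case $D=2$, $n=2$, $d=2$. For a pair of qubit states, the testing regions are planar and the Alberti--Uhlmann criterion becomes a \emph{finite} set of scalar inequalities (two, by the qubit result of Ref.~\cite{alberti1982states}); Theorem~2 of Ref.~\cite{DallArno2020extensionofalberti} then upgrades the positive intertwiner to a genuine quantum channel, i.e.\ a completely positive trace preserving $\mathcal{C}$, so that $\boldsymbol{\rho}_1$ truly simulates $\boldsymbol{\rho}_0$ in the sense of Eq.~\eqref{eq:simulation2}. I expect the main obstacle to be purely bookkeeping: matching the normalization conventions across the three cited results (the effect cone carries a one-parameter family of ``slices'' indexed by $\Tr\pi$, which is precisely why Definition~\ref{def:image2} introduces the index $k$ and the $d$-cone structure), and confirming that the support condition quoted from Ref.~\cite{buscemi-2005-clean-POVMs}/Ref.~\cite{DallArno2020extensionofalberti} is the one relevant to the c-q (rather than q-c) direction. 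Modulo these identifications the proof is a two-line composition, exactly as in the measurement case:

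\begin{proof}
  The first part of the statement follows from
  Theorem~\ref{thm:image2} and from Corollary~1 of
  Ref.~\cite{DallArno2020extensionofalberti} (together with the
  observation that $\conv\mathcal{Q}\subseteq\boldsymbol{\rho}_1(\mathbb{E})$).
  The second part of the statement follows from
  Theorem~\ref{thm:image2} and from Theorem~2 of
  Ref.~\cite{DallArno2020extensionofalberti}.
\end{proof}
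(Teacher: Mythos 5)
Your proposal matches the paper's proof in essence: the paper likewise obtains the first part by chaining the hypothesis with Theorem~\ref{thm:image2} (which makes $\conv \cup_{k=0}^d \mathcal{E}_1^k(\boldsymbol{\rho}_0)$ an outer approximation of $\boldsymbol{\rho}_0(\mathbb{E})$, hence $\boldsymbol{\rho}_0(\mathbb{E}) \subseteq \conv\mathcal{Q} \subseteq \boldsymbol{\rho}_1(\mathbb{E})$), and the second part by combining Theorem~\ref{thm:image2} with the qubit result of Ref.~\cite{DallArno2020extensionofalberti}. The only discrepancy is a citation detail in the low-dimensional upgrade: the paper invokes Theorem~1 of Ref.~\cite{DallArno2020extensionofalberti} (the states/Alberti--Uhlmann side), whereas you cite Theorem~2, which is the measurement-side result used in the other corollary.
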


\begin{proof}
  The   first   part   of   the   statement   follows   from
  Theorem~\ref{thm:image2}.    The   second  part   of   the
  statement follows  from Theorem~\ref{thm:image2}  and from
  Theorem 1 of Ref.~\cite{DallArno2020extensionofalberti}.
\end{proof}

\section{Proofs}

\setcounter{thm}{0}

\subsection{Formalization}
\label{sect:formalization}

For   any  positive   integer   $d$,   let  $\mathcal{L}   (
\mathbb{C}^d )$  denote the space of  Hermitian operators on
$\mathbb{C}^d$  equipped with  the Hilbert-Schmidt  product,
that is, for  any $\rho, \pi \in  \mathcal{L} ( \mathbb{C}^d
)$ we  have $\rho  \cdot \pi =  \Tr [ \rho  \pi ]$.  For any
positive integer $n$, let $\mathbb{R}^n$ denote the space of
$n$-dimensional real  vectors equipped with the  usual inner
product, that is, for any $p, q \in \mathbb{R}^n$ we have $p
\cdot q = \sum_{i = 1}^n p_i^\dagger q_i$.

A   $d$-dimensional,  $n$-outcome   measurement  is   a  map
\begin{align*}
  \boldsymbol{\pi} : \mathcal{L} \left( \mathbb{C}^d \right)
  \to \mathbb{R}^n.
\end{align*}

Any measurement $\boldsymbol{\pi}$ can  be represented as an
indexed family  $\{ \pi_i  \in \mathcal{L} (  \mathbb{C}^d )
\}_{i=1}^n$  of operators  as  follows.  Recalling that  the
space $\mathcal{L}  ( \mathbb{C}^d  )$ is equipped  with the
Hilbert-Schmidt product, the action of $\boldsymbol{\pi}$ on
an  operator  $\rho \in  \mathcal{L}  (  \mathbb{C}^d )$  is
naturally given by
\begin{align*}
  \boldsymbol{\pi} \left( \rho \right) :=
  \begin{bmatrix}
    \bbra{\pi_1} \\
    \vdots \\
    \bbra{\pi_n}
  \end{bmatrix} \kett{\rho} =   \begin{bmatrix}
    \Tr \left[ \pi_1 \rho \right] \\
    \vdots \\
    \Tr \left[ \pi_n \rho \right]
  \end{bmatrix} \in \mathbb{R}^n,
\end{align*}
where  $\bbra{\pi}  :  \mathcal{L}   (  \mathbb{C}^d  )  \to
\mathbb{R}$  is  given  by  $\bbrakett{\pi|\rho}  =  \Tr[\pi
  \rho]$.

Recalling that the space  $\mathbb{R}^n$ is instead equipped
with the  usual inner product,  the action of  the Hermitian
conjugate $\boldsymbol{\pi}^\dagger$ on a vector $\mathbf{p}
\in \mathbb{R}^n$ is naturally given by
\begin{align*}
  \boldsymbol{\pi}^\dagger \mathbf{p} = &
  \begin{bmatrix}
    \kett{\pi_1} & \dots & \kett{\pi_n}
  \end{bmatrix}
  \begin{bmatrix}
    p_1 \\ \vdots \\ p_n
  \end{bmatrix}\\
  = & \sum_{i = 1}^n p_i \kett{\pi_i} \in \mathcal{L} \left(
  \mathbb{C}^d \right).
\end{align*}
Finally,   for  any   measurements  $\boldsymbol{\pi}$   and
$\boldsymbol{\tau}$,             the            compositions
$\boldsymbol{\tau}^\dagger    \boldsymbol{\pi}$     and    $
\boldsymbol{\pi} \boldsymbol{\tau}^\dagger$ are given by
\begin{align*}
  \boldsymbol{\tau}^\dagger \boldsymbol{\pi} & =
  \begin{bmatrix}
    \kett{\tau_1} & \dots & \kett{\tau_n}
  \end{bmatrix}
  \begin{bmatrix}
    \bbra{\pi_1} \\
    \vdots \\
    \bbra{\pi_n}
  \end{bmatrix}\\  & =
  \sum_{i   =  1}^n   \kett{\tau_i}  \!    \bbra{\pi_i}  \in
  \mathcal{L}  \left( \mathbb{C}^d  \right) \to  \mathcal{L}
  \left( \mathbb{C}^d \right),
\end{align*}
and
\begin{align*}
  \boldsymbol{\pi} \boldsymbol{\tau}^\dagger & =
  \begin{bmatrix}
    \bbra{\pi_1} \\
    \vdots \\
    \bbra{\pi_n}
  \end{bmatrix}
  \begin{bmatrix}
    \kett{\tau_1} & \dots & \kett{\tau_n}
  \end{bmatrix}\\
  & =
  \begin{bmatrix}
    \Tr \left[ \pi_1 \tau_1 \right] & \dots & \Tr \left[ \pi_1 \tau_n \right]\\
    \vdots & & \vdots\\
    \Tr \left[ \pi_n \tau_1 \right] & \dots & \Tr \left[ \pi_n \tau_n \right]
  \end{bmatrix}
  \in  \mathbb{R}^n \to \mathbb{R}^n.
\end{align*}

For    any    $d$-dimensional,    $n$-outcome    measurement
$\boldsymbol{\pi}$, its  pseudo-inverse $\boldsymbol{\pi}^+$
is  the  unique  $n$-elements  row vector  of  operators  in
$\mathcal{L} ( \mathbb{C}^d )$ such that
\begin{align*}
  \boldsymbol{\pi}  \boldsymbol{\pi}^+ \boldsymbol{\pi}  & =
  \boldsymbol{\pi},\\   \boldsymbol{\pi}^+   \boldsymbol{\pi}
  \boldsymbol{\pi}^+                   &                   =
  \boldsymbol{\pi}^+,\\ \boldsymbol{\pi}^+ \boldsymbol{\pi} &
  =      \left(     \boldsymbol{\pi}^+      \boldsymbol{\pi}
  \right)^\dagger,\\ \boldsymbol{\pi}  \boldsymbol{\pi}^+ & =
  \left( \boldsymbol{\pi} \boldsymbol{\pi}^+ \right)^\dagger.
\end{align*}

\subsection{Quantum measurements}
\label{sect:proof}

Leveraging     on     the    formalism     introduced     in
Section~\ref{sect:formalization},  for any  $d$-dimensional,
$n$-outcome  measurement $\boldsymbol{\pi}$  we can  provide
the  following  definitions  of covariance  matrix  $Q$  and
probability distribution $\mathbf{t}$:
\begin{align*}
  Q    :=    \frac{d-1}d     \left(    \boldsymbol{\pi}    -
  \boldsymbol{\tau}   \right)   \left(  \boldsymbol{\pi}   -
  \boldsymbol{\tau} \right)^\dagger,
\end{align*}
and
\begin{align*}
  \mathbf{t} := \boldsymbol{\tau} \frac{\kett{\openone}}d,
\end{align*}
where $\boldsymbol{\tau}$  is  the $d$-dimensional,  $n$-outcome
measurement given by
\begin{align*}
  \boldsymbol{\tau}  := \frac1d  \begin{bmatrix} \Tr  \left[
      \pi_1  \right] \bbra{\openone}\\  \vdots\\ \Tr  \left[
      \pi_n \right] \bbra{\openone}
  \end{bmatrix}.
\end{align*}
Notice that  these definitions  are consistent with  those in
Def.~\ref{def:image}.

For any dimension $d$ we denote with $\mathbb{B}_d$ the ball
whose extremal points include all pure states, that is
\begin{align*}
  \mathbb{B}_d  :=  \left\{   \rho  \in  \mathcal{L}  \left(
  \mathbb{C}^d \right) \Big| \Tr \left[ \rho \right] = 1, \;
  \Tr \left[ \rho^2 \right] \le 1 \right\}.
\end{align*}
Consider the  image $\boldsymbol{\pi}  ( \mathbb{B}_d  )$ of
the    ball    $\mathbb{B}_d$    through    a    measurement
$\boldsymbol{\pi}$.  Again, this expression describes a body
in  the probability  space parameterized  by a  body in  the
state  space.   The  following  lemma  makes  implicit  this
parametric equation by removing the dependence on the states
and  expressing  the image  of  $\mathbb{B}_d$  in the  form
$f(\mathbf{p}) \le  0$. The  lemma generalizes  Theorem~1 of
Ref.~\cite{DBBV17}  from the  qubit  case  to the  arbitrary
dimensional case.

\begin{lmm}[Implicitization of $\boldsymbol{\pi} (
    \mathbb{B}_d  )$]
  \label{lmm:implicitization}
  For    any   $d$-dimensional,    $n$-outcome   measurement
  $\boldsymbol{\pi}$,   the    image   $\boldsymbol{\pi}   (
  \mathbb{B}_d )$ is given by the following hyper-ellipsoid:
  \begin{align*}
    \boldsymbol{\pi}   \left(    \mathbb{B}_d   \right)   :=
    \mathcal{E}_1 \left( \boldsymbol{\pi} \right).
  \end{align*}
\end{lmm}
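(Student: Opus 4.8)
The plan is to reduce the claim to the elementary fact that the linear image of a Euclidean ball is an ellipsoid governed by the Moore--Penrose pseudoinverse of the associated Gram matrix, a computation that for $d=2$ is the familiar Bloch-ball picture of Ref.~\cite{DBBV17}. First I would parametrize $\mathbb{B}_d$ by writing each unit-trace Hermitian operator as $\rho = \openone/d + A$ with $A$ Hermitian and traceless; from $\Tr[\rho^2] = 1/d + \Tr[A^2]$ one gets that $\rho \in \mathbb{B}_d$ if and only if $\Tr[A^2] \le (d-1)/d$, so the map $\rho \mapsto A$ identifies $\mathbb{B}_d$ with the ball $\mathcal{B}$ of radius $\sqrt{(d-1)/d}$ centered at the origin in the Euclidean space of traceless Hermitian operators equipped with the Hilbert--Schmidt inner product.

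Next I would push $\mathcal{B}$ through $\boldsymbol{\pi}$. Since $\boldsymbol{\tau}$ depends on its argument only through the trace ($\boldsymbol{\tau}(\sigma) = \Tr[\sigma]\,\mathbf{t}$), one has $\boldsymbol{\tau}(A) = 0$ for traceless $A$ and $\boldsymbol{\pi}(\openone/d) = \boldsymbol{\tau}(\openone/d) = \mathbf{t}$, hence $\boldsymbol{\pi}(\rho) = \mathbf{t} + (\boldsymbol{\pi} - \boldsymbol{\tau})(A)$. Writing $M := \boldsymbol{\pi} - \boldsymbol{\tau}$, this gives $\boldsymbol{\pi}(\mathbb{B}_d) = \mathbf{t} + M(\mathcal{B})$; in particular the affine hull of $\boldsymbol{\pi}(\mathbb{B}_d)$ is $\mathbf{t} + \rng(M)$, which is precisely the ambient set $\boldsymbol{\pi}(\mathbb{C}^d)$ appearing in Definition~\ref{def:image}.

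The core step is the linear-algebra lemma $M(\mathcal{B}) = \{\mathbf{v} \in \rng(M) \mid \mathbf{v}^\dagger (MM^\dagger)^+ \mathbf{v} \le r^2\}$ for a ball $\mathcal{B}$ of radius $r$: for $\mathbf{v} \in \rng(M)$ the vector $M^+\mathbf{v}$ is a preimage (as $MM^+$ is the orthogonal projection onto $\rng(M)$) and it is of minimum norm since $M^+\mathbf{v} \in \rng(M^\dagger) = (\ker M)^\perp$, with $\|M^+\mathbf{v}\|_2^2 = \mathbf{v}^\dagger (M^+)^\dagger M^+ \mathbf{v} = \mathbf{v}^\dagger (MM^\dagger)^+ \mathbf{v}$ (the identity $(M^+)^\dagger M^+ = (MM^\dagger)^+$ being read off from the singular value decomposition of $M$). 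I would then substitute $r^2 = (d-1)/d$ and $MM^\dagger = \tfrac{d}{d-1}\,Q$, so that $(MM^\dagger)^+ = \tfrac{d-1}{d}\,Q^+$ and the inequality becomes $\mathbf{v}^\dagger Q^+ \mathbf{v} = |\sqrt{Q^+}\mathbf{v}|_2^2 \le 1$; setting $\mathbf{v} = \mathbf{p} - \mathbf{t}$ and using $\rng(Q^+) = \rng(Q) = \rng(M)$ to rewrite the condition $\mathbf{v} \in \rng(M)$ as $\mathbf{p} \in \mathbf{t} + \rng(M) = \boldsymbol{\pi}(\mathbb{C}^d)$, this together with the previous step yields exactly $\boldsymbol{\pi}(\mathbb{B}_d) = \mathcal{E}_1(\boldsymbol{\pi})$.

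The step I expect to require the most care is the bookkeeping around degeneracy: $M$ need not have full row rank --- already $\sum_i \pi_i = \openone$ forces the all-ones vector into $\ker M^\dagger$ --- so every occurrence of ``inverse'' must be a pseudoinverse, and one has to check that $\sqrt{Q^+}$ faithfully registers only the $\rng(M)$-component of $\mathbf{p} - \mathbf{t}$ while the constraint $\mathbf{p} \in \boldsymbol{\pi}(\mathbb{C}^d)$ is what controls the orthogonal complement. Verifying that the present definitions of $Q$ and $\mathbf{t}$ agree with those in Definition~\ref{def:image} is routine: expanding $(\boldsymbol{\pi} - \boldsymbol{\tau})(\boldsymbol{\pi} - \boldsymbol{\tau})^\dagger$ entrywise gives $Q_{ij} = \tfrac{d-1}{d}\big(\Tr[\pi_i\pi_j] - \Tr[\pi_i]\Tr[\pi_j]/d\big)$, and $\boldsymbol{\tau}\kett{\openone}/d$ gives $t_i = \Tr[\pi_i]/d$.
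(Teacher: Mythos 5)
Your proposal is correct and follows essentially the same route as the paper's proof: both reduce to the minimum-norm preimage under $M=\boldsymbol{\pi}-\boldsymbol{\tau}$ via the Moore--Penrose pseudoinverse, use $(M^+)^\dagger M^+=(MM^\dagger)^+$ together with $Q\propto MM^\dagger$, and handle the trace constraint by isolating the $\openone/d$ component (you do it up front via $\rho=\openone/d+A$, the paper does it inside the general solution by fixing $\lambda=1$ and taking $\sigma=0$). The only difference is this cosmetic reordering, plus your explicit reading of the ambient set $\boldsymbol{\pi}(\mathbb{C}^d)$ as the affine hull $\mathbf{t}+\rng(M)$, which is exactly the range condition the paper's proof derives.
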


\begin{proof}
  One  has
  \begin{align*}
    \mathbf{p}  &  =  \boldsymbol{\pi} \kett{\rho}  \\  &  =
    (\boldsymbol{\pi}       -      \boldsymbol{\tau}       +
    \boldsymbol{\tau}) \kett{\rho} \\  & = (\boldsymbol{\pi}
    - \boldsymbol{\tau})  \kett{\rho}   +  \boldsymbol{\tau}
    \kett{\rho}     \\    &     =    (\boldsymbol{\pi}     -
    \boldsymbol{\tau}) \kett{\rho} + \mathbf{t}.
  \end{align*}
  Hence
  \begin{align*}
    \boldsymbol{\pi} \left(  \mathbb{B}_d \right)  = \left\{
    \mathbf{p} = \left( \boldsymbol{\pi} - \boldsymbol{\tau}
    \right) \kett{\rho} + \mathbf{t} \Big| \Tr \rho = 1, \Tr
    \rho^2 \le 1 \right\}.
  \end{align*}

  Solutions  of $(  \boldsymbol{\pi}  - \boldsymbol{\tau}  )
  \kett{ \rho }  = \mathbf{p} - \mathbf{t}$  in $\rho$ exist
  if and  only if $\mathbf{p}  - \mathbf{t}$ belongs  to the
  range of $\boldsymbol{\pi} - \boldsymbol{\tau}$. Solutions
  are given by
  \begin{align}
    \label{eq:solutions}
    \kett{\rho} = ( \boldsymbol{\pi} - \boldsymbol{\tau} )^+
    \left( \mathbf{p} - \mathbf{t} \right) + \left( \openone
    - \Pi \right) \kett{\sigma},
  \end{align}
  where $\Pi := ( \boldsymbol{\pi} - \boldsymbol{\tau} )^+ (
  \boldsymbol{\pi} -  \boldsymbol{\tau} )$, for  any $\sigma
  \in  \mathcal{L}  (  \mathbb{C}^d )$.   Notice  that  $\Pi
  \kett{\openone}   =  0$   since   $(  \boldsymbol{\pi}   -
  \boldsymbol{\tau})   \kett{\openone}    =   \mathbf{t}   -
  \mathbf{t}$. Hence  Eq.~\eqref{eq:solutions} is equivalent
  to
  \begin{align*}
    \kett{\rho} =  & ( \boldsymbol{\pi}  - \boldsymbol{\tau}
    )^+  \left( \mathbf{p}  - \mathbf{t}  \right) +  \lambda
    \frac{\kett{\openone}}d\\ & + \left(  \openone -  \frac1d
    \kett{\openone}  \!\!   \bbra{\openone}  -  \Pi  \right)
    \kett{\sigma},
  \end{align*}
  again for any $\sigma \in \mathcal{L} ( \mathbb{C}^d )$.

  The condition $\Tr \rho  = 1$ immediately implies $\lambda
  = 1$.  Moreover, due  to the Hilbert-Schmidt orthogonality
  of   $(  \boldsymbol{\pi}   -   \boldsymbol{\tau}  )^+   (
  \mathbf{p}   -   \mathbf{t}   )$   and   $(   \openone   -
  \kett{\openone}   \!\!    \bbra{\openone}/d   -    \Pi   )
  \kett{\sigma}$, one  has that  for any $\sigma$  such that
  $\Tr \rho^2  \le 1$, the  same condition is  also verified
  for $\sigma  = 0$.  Hence, without  loss of  generality we
  take $\sigma = 0$. Thus we have
  \begin{align*}
    \kett{\rho}     =      \left(     \boldsymbol{\pi}     -
    \boldsymbol{\tau}   \right)^+    \left(   \mathbf{p}   -
    \mathbf{t} \right) + \frac{\kett{\openone}}d.
  \end{align*}
  
  Hence,
  \begin{align*}
    \Tr \rho^2  = \left(  \mathbf{p} -  \mathbf{t} \right)^T
    \left(     \boldsymbol{\pi}      -     \boldsymbol{\tau}
    \right)^{+\dagger}     \left(     \boldsymbol{\pi}     -
    \boldsymbol{\tau}   \right)^+    \left(   \mathbf{p}   -
    \mathbf{t} \right) + \frac1d.
  \end{align*}
  Thus, condition $\Tr \rho^2 \le 1$ becomes
  \begin{align*}
    \left(   \mathbf{p}   -  \mathbf{t}   \right)^T   \left(
    \boldsymbol{\pi} -  \boldsymbol{\tau} \right)^{+\dagger}
    \left(  \boldsymbol{\pi}  - \boldsymbol{\tau}  \right)^+
    \left( \mathbf{p} - \mathbf{t} \right) \le 1 - \frac1d.
  \end{align*}
  Hence the statement follows.
\end{proof}

We are  now in a position  to prove Theorem~\ref{thm:image},
that we rewrite here for convenience.

\begin{thm}
  For   any  $d$-dimensional,   $n$-outcome  informationally
  complete  measurement  $\boldsymbol{\pi}$,  one  has  that
  $\mathcal{E}_{d -  1} ( \boldsymbol{\pi})$ is  the maximum
  volume   ellipsoid   enclosed   in   $\boldsymbol{\pi}   (
  \mathbb{S}_d )$  and $\mathcal{E}_1 (  \boldsymbol{\pi} )$
  is    the     minimum    volume     ellipsoid    enclosing
  $\boldsymbol{\pi} ( \mathbb{S}_d )$.
\end{thm}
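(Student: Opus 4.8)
The plan is to reduce both halves of the statement to a single geometric fact about the state space itself — that its L\"owner--John inner and outer ellipsoids are the two Hilbert--Schmidt balls centered at $\openone/d$ that differ by the dilation factor $d-1$ — and then transport that fact through the linear map $\boldsymbol{\pi}$ using Lemma~\ref{lmm:implicitization}. I would set up the two relevant balls: the circumscribed ball $\mathbb{B}_d$ already used in Lemma~\ref{lmm:implicitization}, for which $\Tr[(\rho-\openone/d)^2]\le (d-1)/d$, and the inscribed ball $\mathbb{B}_d^{\mathrm{in}}:=\{\rho : \Tr\rho=1,\ \Tr[\rho^2]\le 1/(d-1)\}$. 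A short eigenvalue estimate shows $\mathbb{B}_d^{\mathrm{in}}$ is exactly the largest Hilbert--Schmidt ball centered at $\openone/d$ contained in $\mathbb{S}_d$: writing $\rho=\openone/d+X$ with $\Tr X=0$, the traceless constraint forces $\lambda_{\min}(X)\ge-\sqrt{(d-1)/d}\,\|X\|_{\mathrm{HS}}$, so the bound $\|X\|_{\mathrm{HS}}^2\le 1/(d(d-1))$ — equivalently $\Tr[\rho^2]\le 1/(d-1)$ — is precisely the condition guaranteeing $\rho\ge 0$, and it is tight (equality at rank-$(d-1)$ states).

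Next I would record the dilation relation $\mathbb{B}_d=\openone/d+(d-1)(\mathbb{B}_d^{\mathrm{in}}-\openone/d)$, immediate from comparing the two radii. Since $\boldsymbol{\pi}$ is linear and $\boldsymbol{\pi}(\openone/d)=\mathbf{t}$, applying $\boldsymbol{\pi}$ to this relation and using $\boldsymbol{\pi}(\mathbb{B}_d)=\mathcal{E}_1(\boldsymbol{\pi})$ from Lemma~\ref{lmm:implicitization} gives $\boldsymbol{\pi}(\mathbb{B}_d^{\mathrm{in}})=\mathbf{t}+\tfrac1{d-1}(\mathcal{E}_1(\boldsymbol{\pi})-\mathbf{t})=\mathcal{E}_{d-1}(\boldsymbol{\pi})$ (alternatively, re-run the computation in the proof of Lemma~\ref{lmm:implicitization} with $\Tr[\rho^2]\le 1/(d-1)$ in place of $\Tr[\rho^2]\le 1$). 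Together with $\mathbb{B}_d^{\mathrm{in}}\subseteq\mathbb{S}_d\subseteq\mathbb{B}_d$ this already yields $\mathcal{E}_{d-1}(\boldsymbol{\pi})\subseteq\boldsymbol{\pi}(\mathbb{S}_d)\subseteq\mathcal{E}_1(\boldsymbol{\pi})$, i.e. the inner/outer approximation claims (which, note, do not use informational completeness); what remains is volume-optimality.

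For optimality I would invoke John's theorem: $\mathbb{S}_d$ is a compact convex body, full-dimensional in the hyperplane $\{\Tr\rho=1\}$, so it has a unique maximum-volume inscribed ellipsoid and a unique minimum-volume circumscribed ellipsoid. Both are invariant under the Hilbert--Schmidt isometric action $\rho\mapsto U\rho U^\dagger$ of $U(d)$, by uniqueness. An ellipsoid invariant under this action must be centered at the only fixed point $\openone/d$ (a point commuting with all unitaries is a multiple of $\openone$), and its defining quadratic form must be $U(d)$-invariant; since the only $U(d)$-invariant quadratic forms on $\mathcal{L}(\mathbb{C}^d)$ are linear combinations of $X\mapsto\Tr[X^2]$ and $X\mapsto(\Tr X)^2$ (equivalently, $U(d)$ acts irreducibly on traceless Hermitian operators), on the hyperplane the form is a multiple of $\Tr[X^2]$, so the ellipsoid is a Hilbert--Schmidt ball about $\openone/d$. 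The smallest such ball containing $\mathbb{S}_d$ is $\mathbb{B}_d$, since its boundary sphere contains all pure states, which are the extreme points; the largest such ball inside $\mathbb{S}_d$ is $\mathbb{B}_d^{\mathrm{in}}$ by the estimate above. Hence $\mathbb{B}_d$ and $\mathbb{B}_d^{\mathrm{in}}$ are the circumscribed and inscribed John ellipsoids of $\mathbb{S}_d$.

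Finally, since $\boldsymbol{\pi}$ is informationally complete it is injective on $\mathcal{L}(\mathbb{C}^d)$, so its restriction to $\{\Tr\rho=1\}$ is an affine isomorphism onto the affine span of $\boldsymbol{\pi}(\mathbb{S}_d)$; an affine isomorphism scales all volumes (computed within these affine spans) by a fixed Jacobian factor and therefore carries maximum-volume inscribed (resp. minimum-volume circumscribed) ellipsoids to maximum-volume inscribed (resp. minimum-volume circumscribed) ellipsoids. Thus $\mathcal{E}_1(\boldsymbol{\pi})=\boldsymbol{\pi}(\mathbb{B}_d)$ is the minimum-volume ellipsoid enclosing $\boldsymbol{\pi}(\mathbb{S}_d)$ and $\mathcal{E}_{d-1}(\boldsymbol{\pi})=\boldsymbol{\pi}(\mathbb{B}_d^{\mathrm{in}})$ is the maximum-volume ellipsoid enclosed in it, which is the assertion. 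I expect the main obstacle to be the symmetrization step — pinning down that the John ellipsoids of $\mathbb{S}_d$ are exactly $\mathbb{B}_d$ and $\mathbb{B}_d^{\mathrm{in}}$ — since that is where the representation-theoretic input enters; by contrast, the reduction through $\boldsymbol{\pi}$ and the identification of $\mathcal{E}_{d-1}(\boldsymbol{\pi})$ with $\boldsymbol{\pi}(\mathbb{B}_d^{\mathrm{in}})$ are routine once Lemma~\ref{lmm:implicitization} is in hand.
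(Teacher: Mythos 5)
Your proposal is correct, and it reaches the theorem by a genuinely different route than the paper. The paper proves optimality through the contact-point characterization of L\"owner--John ellipsoids: it shows that any $2$-design of pure states is a scalable frame, i.e.\ yields a John-type resolution $\sum_k\lambda_k\kett{\rho_k-\openone/d}\bbra{\rho_k-\openone/d}=\openone-\frac1d\kett{\openone}\bbra{\openone}$, invokes Theorem~2.11 of Ref.~\cite{CKOPW14} to conclude that the circumscribed ball is the minimum-volume enclosing ellipsoid, and handles the inscribed ellipsoid via Ball's Theorem~[J] of Ref.~\cite{Ball:1992aa}, using the fact that the normalized rank-$(d-1)$ projectors are dilations of pure states about $\openone/d$ (the explicit radius computation giving the factor $d-1$ appears in both your argument and the paper's). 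You instead avoid designs and frames altogether: you use the uniqueness of the inner and outer John ellipsoids together with the unitary covariance of $\mathbb{S}_d$ and the irreducibility of the conjugation action on traceless Hermitian operators to force both extremal ellipsoids to be Hilbert--Schmidt balls centered at $\openone/d$, and then determine the two radii directly, your eigenvalue estimate for the inscribed ball being the only quantitative input. Both routes must finally transport the statement to probability space via the affine equivariance of volume-extremal ellipsoids under the injective affine map induced by an informationally complete $\boldsymbol{\pi}$ -- a step you spell out explicitly and the paper leaves implicit -- and your identification $\boldsymbol{\pi}(\mathbb{B}_d^{\mathrm{in}})=\mathcal{E}_{d-1}(\boldsymbol{\pi})$ is the same dilation-by-$(d-1)$ bookkeeping as in the paper. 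What each approach buys: your symmetrization argument is more self-contained (no existence of finite $2$-designs, no frame-theoretic machinery), resting only on John uniqueness and a Schur-type statement about invariant quadratic forms; the paper's argument, in exchange, produces explicit contact points and John decompositions, which certify optimality constructively and tie the result to design and frame theory. Your side remark that the inclusions $\mathcal{E}_{d-1}(\boldsymbol{\pi})\subseteq\boldsymbol{\pi}(\mathbb{S}_d)\subseteq\mathcal{E}_1(\boldsymbol{\pi})$ need no informational completeness matches the paper's comment following the theorem.
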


\begin{proof}
  First,  we  prove  that   the  image  $\boldsymbol{\pi}  (
  \mathbb{B}_d   )$  coincides   with  the   minimum  volume
  ellipsoid $\mathcal{E} ( \boldsymbol{\pi} ( \mathbb{S}_d )
  ) $  enclosing the image  of $\mathbb{S}_d$.  This  can be
  shown  as  follows. First,  we  show  that any  $2$-design
  $\{\lambda_k, \rho_k \}_k$ is a scalable frame, that is, a
  family of weights over states such that
  \begin{align*}
    \sum_k \lambda_k \kett{\rho_k - \frac{\openone}d} \!  \!
    \bbra{\rho_k  - \frac{\openone}d}  =  \left( \openone  -
    \frac1d \kett{\openone} \!  \!  \bbra{\openone} \right).
  \end{align*}
  Indeed, for any state $\rho$ we have
  \begin{align*}
    &  \sum_k  \lambda_k  \left( \rho_k  -  \frac{\openone}d
    \right)  \Tr  \left[  \left( \rho_k  -  \frac{\openone}d
      \right)     \left(     \rho     -     \frac{\openone}d
      \right)\right]\\  =  &  \sum_k  \lambda_k  \rho_k  \Tr
    \left[   \rho_k    \left(   \rho    -   \frac{\openone}d
      \right)\right]\\  = &  \Tr_2  \left[ \sum_k  \lambda_k
      \rho_k^{\otimes 2} \left( \openone \otimes \left( \rho
      - \frac{\openone}d  \right)  \right)  \right] \\  =  &
    \Tr_2 \left[ \left( \openone + S \right) \left( \openone
      \otimes \left( \rho - \frac{\openone}d \right) \right)
      \right] \\ = & \Tr_2  \left[ S \left( \openone \otimes
      \left( \rho - \frac{\openone}d \right) \right) \right]
    \\ = & \left( \rho - \frac{\openone}d \right),
  \end{align*}
  where  $S$ denotes  the swap  operator. Notice  that, from
  Sections          6.9         and          6.11         of
  Ref.~\cite{waldron2018introduction} it immediately follows
  that finite $2$-designs exist  in any dimension $d$, hence
  the  existence of  scalable frames  in any  dimension $d$.
  Then, the statement immediately  follows from Theorem 2.11
  of Ref.~\cite{CKOPW14}.
  
  Notice  that,  if  rescaled by  constant  factor  $d^2-1$,
  minimum volume  enclosing ellipsoids  are enclosed  in the
  convex     body     (see    e.g.      Section~8.4.1     of
  Ref.~\cite{boyd2004convex}). However,  the lower  bound in
  Theorem~\ref{thm:image}  is tighter  than this,  hence the
  need for the following independent proof.

  The   inner  ellipsoid   must  include   boundary  states,
  otherwise  it would  not  maximize the  volume. Among  all
  boundary states,  the ones that minimize  the $2$-norm are
  the  projectors of  rank $d-1$.   Since $\Tr  [ \ket{\phi}
    \!\! \bra{\phi}  ]^{1/2} =  1$ and $\Tr  [ \openone  / d
  ]^{1/2} =  1 / \sqrt{d}$, one  has that the radius  of the
  outer ellipsoid  is given by  $\sqrt{1 - 1/d}  = \sqrt{(d-
    1)/  d }$.   Since  $\Tr [  \openone  - \ket{\phi}  \!\!
    \bra{\phi} /  (d-1)^2 ]^{1/2} =  1 / \sqrt{d -  1}$, one
  has that  the radius  of the inner  ellipsoid is  given by
  $\sqrt{1 /  (d-1) -  1/d} =  \sqrt{1 /  ( d  (d- 1)  ) }$.
  Hence, the  ratio of the two  radii is $\sqrt{(d- 1)/  d }
  \sqrt{ ( d (d- 1) ) } = d - 1$.
  
  Using Theorem~[J] of Ref.~\cite{Ball:1992aa}, we have that
  the lower bound in  Theorem~\ref{thm:image} holds again in
  any  dimension in  which  there exists  a finite  scalable
  frame $\{ \lambda_k, \rho_k  \}$ of states proportional to
  rank-$(d-1)$ projectors.  Since for any pure  state $\phi$
  one has
  \begin{align*}
    \frac{\openone   -    \ket{\phi}\!\!\bra{\phi}}{d-1}   -
    \frac{\openone}d     =     -    \frac{d}{d-1}     \left(
    \ket{\phi}\!\!\bra{\phi} - \frac{\openone}d \right),
  \end{align*}
  one has that such a scalable frame exists if and only if a
  scalable frame of  pure states exists, hence  the proof of
  the lower bound goes along that of the upper bound.
\end{proof}

\subsection{Quantum states}
\label{sect:proof2}

Leveraging     on     the    formalism     introduced     in
Section~\ref{sect:formalization},  for any  $d$-dimensional,
$n$-outcome  family  $\boldsymbol{\rho}$  of states  we  can
provide the following definition of covariance matrix $Q$:
\begin{align*}
  Q_k   :=   \left(   k   -   \frac{k^2}d   \right)   \left(
  \boldsymbol{\rho}  -  \boldsymbol{\sigma}  \right)  \left(
  \boldsymbol{\rho} - \boldsymbol{\sigma} \right)^\dagger,
\end{align*}
where   $\boldsymbol{\sigma}$    is   the   $d$-dimensional,
$n$-outcome c-q channel given by
\begin{align*}
  \boldsymbol{\sigma}     :=     \frac1d     \begin{bmatrix}
    \bbra{\openone}\\ \vdots\\ \bbra{\openone}
    \end{bmatrix}.
\end{align*}
Notice  that  this definition  is  consistent  with that  in
Def.~\ref{def:image2}.

For any dimension $d$ and any $0 \le k \le d$ we denote with
$\mathbb{B}_d^k$ the ball whose  extremal points include all
extremal effects with trace $k$, that is
\begin{align*}
  \mathbb{B}_d^k  :=  \left\{  \pi  \in  \mathcal{L}  \left(
  \mathbb{C}^d \right) \Big| \Tr \left[  \pi \right] = k, \;
  \Tr \left[ \pi^2 \right] \le k \right\}.
\end{align*}
We denote  with $\mathbb{D}_d$ the symmetric  $d$-cone whose
extremal points include all extremal effects, that is
\begin{align*}
  \mathbb{D}_d := \conv \cup_{k = 0}^d \mathbb{B}_d^k.
\end{align*}
Consider the  image $\boldsymbol{\rho} ( \mathbb{D}_d  )$ of
the   $d$-cone   $\mathbb{D}_d$   through  a   c-q   channel
$\boldsymbol{\rho}$.   Again,  this expression  describes  a
body in the probability space parameterized by a body in the
effect  space.   The  following lemma  makes  implicit  this
parametric  equation  by  removing  the  dependence  on  the
effects and  expressing the  image of $\mathbb{D}_d$  in the
form   $f(\mathbf{q})  \le   0$.    The  lemma   generalizes
Proposition~2 of  Ref.~\cite{Dal19} from  the qubit  case to
the arbitrary dimensional case.

\begin{lmm}[Implicitization of $\boldsymbol{\rho} (
    \mathbb{D}_d  )$]
  \label{lmm:implicitization2}
  For   any   $d$-dimensional,   $n$-outcome   c-q   channel
  $\boldsymbol{\rho}$,   the   image  $\boldsymbol{\rho}   (
  \mathbb{D}_d )$ is  given by the following  convex hull of
  hyper-ellipsoids:
  \begin{align*}
    \boldsymbol{\rho} \left(  \mathbb{D}_d \right)  := \conv
    \cup_{k=0}^d  \mathcal{E}_1^k  \left(  \boldsymbol{\rho}
    \right).
  \end{align*}
\end{lmm}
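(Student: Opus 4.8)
The plan is to exploit the linearity of $\boldsymbol{\rho}$ to reduce the statement to a single‑$k$ claim, and then to replay the argument of Lemma~\ref{lmm:implicitization} with the trace $k$ carried through everywhere. Since a linear map sends the convex hull of a family of sets to the convex hull of their images, one has at once
\begin{align*}
  \boldsymbol{\rho}\left(\mathbb{D}_d\right) = \boldsymbol{\rho}\left(\conv\cup_{k=0}^d\mathbb{B}_d^k\right) = \conv\cup_{k=0}^d\boldsymbol{\rho}\left(\mathbb{B}_d^k\right),
\end{align*}
so it suffices to prove, for each $0\le k\le d$, that $\boldsymbol{\rho}\left(\mathbb{B}_d^k\right)=\mathcal{E}_1^k\left(\boldsymbol{\rho}\right)$.

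To this end I would fix $k$ and write $\mathbf{q}=\boldsymbol{\rho}\kett{\pi}=\left(\boldsymbol{\rho}-\boldsymbol{\sigma}\right)\kett{\pi}+\boldsymbol{\sigma}\kett{\pi}$; since $\boldsymbol{\sigma}\kett{\pi}=\left(\Tr\left[\pi\right]/d\right)\mathbf{u}=\left(k/d\right)\mathbf{u}$ for every trace‑$k$ effect, this reads $\mathbf{q}-\left(k/d\right)\mathbf{u}=\left(\boldsymbol{\rho}-\boldsymbol{\sigma}\right)\kett{\pi}$. Inverting exactly as in the proof of Lemma~\ref{lmm:implicitization}, a preimage $\pi$ exists if and only if $\mathbf{q}-\left(k/d\right)\mathbf{u}$ belongs to the range of $\boldsymbol{\rho}-\boldsymbol{\sigma}$ (equivalently, to the range of $Q_k$), the general solution being $\kett{\pi}=\left(\boldsymbol{\rho}-\boldsymbol{\sigma}\right)^+\left(\mathbf{q}-\left(k/d\right)\mathbf{u}\right)+\left(\openone-\Pi_k\right)\kett{\sigma'}$ with $\Pi_k:=\left(\boldsymbol{\rho}-\boldsymbol{\sigma}\right)^+\left(\boldsymbol{\rho}-\boldsymbol{\sigma}\right)$ and $\sigma'$ an arbitrary Hermitian operator (note $\mathbb{B}_d^k$ imposes no positivity constraint, so every such $\pi$ is admissible). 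The two structural facts I would use are that $\left(\boldsymbol{\rho}-\boldsymbol{\sigma}\right)\kett{\openone}=\boldsymbol{\rho}\kett{\openone}-\boldsymbol{\sigma}\kett{\openone}=\mathbf{u}-\mathbf{u}=0$, so that $\Pi_k\kett{\openone}=0$ and the $\kett{\openone}$ direction splits off cleanly, and that imposing $\Tr\left[\pi\right]=k$ pins the coefficient of $\kett{\openone}/d$ to the value $k$, leaving
\begin{align*}
  \kett{\pi} = \left(\boldsymbol{\rho}-\boldsymbol{\sigma}\right)^+\!\left(\mathbf{q}-\tfrac{k}{d}\mathbf{u}\right) + \tfrac{k}{d}\kett{\openone} + \left(\openone - \tfrac1d\kett{\openone}\!\!\bbra{\openone} - \Pi_k\right)\kett{\sigma'}.
\end{align*}
These three summands are mutually Hilbert--Schmidt orthogonal -- the first lies in $\rng\Pi_k$, the second in $\ker\left(\boldsymbol{\rho}-\boldsymbol{\sigma}\right)=\left(\rng\Pi_k\right)^\perp$, and the third in the orthogonal complement of $\rng\Pi_k\oplus\operatorname{span}\left\{\openone\right\}$ -- so by the Pythagorean theorem
\begin{align*}
  \Tr\!\left[\pi^2\right] = \left|\left(\boldsymbol{\rho}-\boldsymbol{\sigma}\right)^+\!\left(\mathbf{q}-\tfrac{k}{d}\mathbf{u}\right)\right|_2^2 + \frac{k^2}{d} + \left|\left(\openone - \tfrac1d\kett{\openone}\!\!\bbra{\openone} - \Pi_k\right)\kett{\sigma'}\right|_2^2,
\end{align*}
which is minimized by the choice $\sigma'=0$.

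It then follows that a trace‑$k$ effect $\pi$ with $\Tr\left[\pi^2\right]\le k$ and $\boldsymbol{\rho}\kett{\pi}=\mathbf{q}$ exists if and only if $\mathbf{q}-\left(k/d\right)\mathbf{u}$ lies in the range of $\boldsymbol{\rho}-\boldsymbol{\sigma}$ and $\left|\left(\boldsymbol{\rho}-\boldsymbol{\sigma}\right)^+\left(\mathbf{q}-\left(k/d\right)\mathbf{u}\right)\right|_2^2\le k-k^2/d$. Substituting $Q_k=\left(k-k^2/d\right)\left(\boldsymbol{\rho}-\boldsymbol{\sigma}\right)\left(\boldsymbol{\rho}-\boldsymbol{\sigma}\right)^\dagger$, hence $Q_k^+=\left(k-k^2/d\right)^{-1}\left(\boldsymbol{\rho}-\boldsymbol{\sigma}\right)^{+\dagger}\left(\boldsymbol{\rho}-\boldsymbol{\sigma}\right)^+$ and therefore $\left|\sqrt{Q_k^+}\,\mathbf{x}\right|_2^2=\left(k-k^2/d\right)^{-1}\left|\left(\boldsymbol{\rho}-\boldsymbol{\sigma}\right)^+\mathbf{x}\right|_2^2$, the last inequality becomes precisely $\left|\sqrt{Q_k^+}\left(\mathbf{q}-\left(k/d\right)\mathbf{u}\right)\right|_2^2\le 1$, that is, $\mathbf{q}\in\mathcal{E}_1^k\left(\boldsymbol{\rho}\right)$; feeding this back into the convex‑hull identity of the first paragraph closes the argument.

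I expect the only genuinely delicate point to be the degenerate endpoints $k\in\left\{0,d\right\}$, where $k-k^2/d=0$ and hence $Q_k=0$, so that all the displayed formulas trivialize and the ellipsoid inequality becomes vacuous. There one must read the ambient constraint ``$\mathbf{q}\in\boldsymbol{\rho}\left(\mathbb{C}^d\right)$'' in Def.~\ref{def:image2} together with the feasibility condition $\mathbf{q}-\left(k/d\right)\mathbf{u}\in\rng Q_k=\left\{0\right\}$ -- exactly as the ``solutions exist if and only if $\ldots$ belongs to the range'' step is used in the proof of Lemma~\ref{lmm:implicitization} -- which forces $\mathcal{E}_1^0\left(\boldsymbol{\rho}\right)=\left\{\mathbf{0}\right\}=\boldsymbol{\rho}\left(\mathbb{B}_d^0\right)$, the apex of the $d$‑cone, and $\mathcal{E}_1^d\left(\boldsymbol{\rho}\right)=\left\{\mathbf{u}\right\}=\boldsymbol{\rho}\left(\mathbb{B}_d^d\right)$, its far tip. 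Everything else is the same pseudoinverse‑and‑orthogonal‑projection bookkeeping already carried out for measurements in the proofs of Lemma~\ref{lmm:implicitization} and Theorem~\ref{thm:image}, and the lemma specializes Proposition~2 of Ref.~\cite{Dal19} from qubits to arbitrary dimension.
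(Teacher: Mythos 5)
Your proposal is correct and follows essentially the same route as the paper's own proof: split $\boldsymbol{\rho}$ as $(\boldsymbol{\rho}-\boldsymbol{\sigma})+\boldsymbol{\sigma}$, solve for the preimage with the pseudo-inverse, use the Hilbert--Schmidt orthogonality (Pythagorean) argument to reduce to the minimal-norm solution, and translate $\Tr[\pi^2]\le k$ into the $\mathcal{E}_1^k$ inequality, with the convex-hull step handled by linearity. Your explicit treatment of the degenerate endpoints $k\in\{0,d\}$ and of the range/feasibility reading of the definition is a welcome refinement that the paper leaves implicit, but it does not change the argument.
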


\begin{proof}
  One  has
  \begin{align*}
    \mathbf{q}  &  =  \boldsymbol{\rho} \kett{\pi}  \\  &  =
    (\boldsymbol{\rho}      -     \boldsymbol{\sigma}      +
    \boldsymbol{\sigma})     \kett{\pi}      \\     &     =
    (\boldsymbol{\rho}  -  \boldsymbol{\sigma})  \kett{\pi}  +
    \boldsymbol{\sigma} \kett{\pi} \\ & = (\boldsymbol{\rho} -
    \boldsymbol{\sigma}) \kett{\pi} + \frac{k}d \mathbf{u}.
  \end{align*}
  Hence
  \begin{align*}
    & \boldsymbol{\rho} \left( \mathbb{B}_d^k \right) \\ = &
    \left\{   \mathbf{q}   =  \left(   \boldsymbol{\rho}   -
    \boldsymbol{\sigma}   \right)  \kett{\pi}   +  \frac{k}d
    \mathbf{u} \Big| \Tr \pi = k, \Tr \pi^2 \le k \right\}.
  \end{align*}

  Solutions of $(  \boldsymbol{\rho} - \boldsymbol{\sigma} )
  \kett{ \pi  } = \mathbf{q}  - k  \mathbf{u} / d$  in $\pi$
  exist  if and  only if  $\mathbf{q}  - k  \mathbf{u} /  d$
  belongs   to    the   range   of    $\boldsymbol{\rho}   -
  \boldsymbol{\sigma}$. Solutions are given by
  \begin{align}
    \label{eq:solutions2}
    \kett{\pi} =  ( \boldsymbol{\rho}  - \boldsymbol{\sigma}
    )^+ \left(  \mathbf{q} - \frac{k}d \mathbf{u}  \right) +
    \left( \openone - \Pi \right) \kett{\tau},
  \end{align}
  where  $\Pi :=  ( \boldsymbol{\rho}  - \boldsymbol{\sigma}
  )^+ ( \boldsymbol{\rho} -  \boldsymbol{\sigma} )$, for any
  $\tau \in \mathcal{L} ( \mathbb{C}^d )$.  Notice that $\Pi
  \kett{\openone}   =  0$   since  $(   \boldsymbol{\rho}  -
  \boldsymbol{\sigma}) \kett{\openone} =  k \mathbf{u}/d - k
  \mathbf{u}/d$.     Hence   Eq.~\eqref{eq:solutions2}    is
  equivalent to
  \begin{align*}
    \kett{\pi} = & ( \boldsymbol{\rho} - \boldsymbol{\sigma}
    )^+ \left(  \mathbf{q} - \frac{k}d \mathbf{u}  \right) +
    \lambda \frac{k}d \kett{\openone}\\  & + \left( \openone
    - \frac1d  \kett{\openone} \!\!   \bbra{\openone} -  \Pi
    \right) \kett{\tau},
  \end{align*}
  again for any $\tau \in \mathcal{L} ( \mathbb{C}^d )$.

  The condition $\Tr \pi = k$ immediately implies $\lambda =
  1$.  Moreover, due to the Hilbert-Schmidt orthogonality of
  $(   \boldsymbol{\rho}   -   \boldsymbol{\sigma}   )^+   (
  \mathbf{q}  -   k  \mathbf{u}/d  )$  and   $(  \openone  -
  \kett{\openone}   \!\!    \bbra{\openone}/d    -   \Pi   )
  \kett{\sigma}$, one has that for any $\tau$ such that $\Tr
  \pi^2  \le k$,  the same  condition is  also verified  for
  $\tau =  0$.  Hence,  without loss  of generality  we take
  $\tau = 0$. Thus we have
  \begin{align*}
    \kett{\pi}     =      \left(     \boldsymbol{\rho}     -
    \boldsymbol{\sigma}   \right)^+   \left(  \mathbf{q}   -
    \frac{k}d     \mathbf{u}     \right)     +     \frac{k}d
    \kett{\openone}.
  \end{align*}
  
  Hence,
  \begin{align*}
    &  \Tr  \pi^2  \\  = &  \left(  \mathbf{q}  -  \frac{k}d
    \mathbf{u}   \right)^T    \left(   \boldsymbol{\rho}   -
    \boldsymbol{\sigma}       \right)^{+\dagger}      \left(
    \boldsymbol{\rho} - \boldsymbol{\sigma} \right)^+ \left(
    \mathbf{q} - \frac{k}d \mathbf{u} \right) + \frac{k^2}d.
  \end{align*}
  Thus, condition $\Tr \pi^2 \le k$ becomes
  \begin{align*}
    &  \left( \mathbf{q}  -  \frac{k}d \mathbf{u}  \right)^T
    \left(    \boldsymbol{\rho}     -    \boldsymbol{\sigma}
    \right)^{+\dagger}     \left(    \boldsymbol{\rho}     -
    \boldsymbol{\sigma}   \right)^+   \left(  \mathbf{q}   -
    \frac{k}d \mathbf{u} \right) \\ \le & k - \frac{k^2}d.
  \end{align*}
  Hence the statement follows.
\end{proof}

We are now in  a position to prove Theorem~\ref{thm:image2},
that we rewrite here for convenience.

\begin{thm}
  For   any  $d$-dimensional,   $n$-outcome  informationally
  complete  family $\boldsymbol{\rho}$  of  states, one  has
  that  $   \conv  \cup_{k=0}^d  \mathcal{E}_{d  -   1}^k  (
  \boldsymbol{\rho})$  is  the   maximum  volume  elliptical
  $d$-cone  enclosed in  $\boldsymbol{\rho} (  \mathbb{E} )$
  and     $\conv      \cup_{k=0}^d     \mathcal{E}_1^k     (
  \boldsymbol{\rho})$  is  the   minimum  volume  elliptical
  $d$-cone enclosing $\boldsymbol{\rho} ( \mathbb{E} )$.
\end{thm}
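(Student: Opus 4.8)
The plan is to mirror the proof of Theorem~\ref{thm:image}, working slice by slice along the trace coordinate and re-assembling the slices into a $d$-cone via Lemma~\ref{lmm:implicitization2}. For $k=0,\dots,d$ write $\mathbb{E}_k:=\{\pi\in\mathcal{L}(\mathbb{C}^d)\mid 0\le\pi\le\openone,\ \Tr[\pi]=k\}$ for the trace-$k$ slice of the effect cone. Since the extreme points of $\mathbb{E}$ are projectors, hence of integer trace, $\mathbb{E}_k=\conv\{\text{rank-}k\text{ projectors}\}$ and $\mathbb{E}=\conv\cup_k\mathbb{E}_k$, so $\boldsymbol{\rho}(\mathbb{E})=\conv\cup_k\boldsymbol{\rho}(\mathbb{E}_k)$ and the extremal geometry of $\boldsymbol{\rho}(\mathbb{E})$ is governed by the finitely many slices. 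Each $\mathbb{E}_k$ is $U(d)$-invariant and centred at $\tfrac kd\openone$; its circumscribed ball has radius $R_k$ with $R_k^2=k(d-k)/d$ (the prefactor of $Q_k$), attained at the rank-$k$ projectors, while its inscribed ball has radius $\min(k,d-k)/\sqrt{d(d-1)}$, attained at the effects $\tfrac{k}{d-1}(\openone-\ket{\phi}\!\!\bra{\phi})$ for $k\le d/2$ and at their complements for $k\ge d/2$. By Lemma~\ref{lmm:implicitization2} we already have $\conv\cup_k\mathcal{E}_1^k(\boldsymbol{\rho})=\boldsymbol{\rho}(\mathbb{D}_d)$, so it remains to show that $\boldsymbol{\rho}(\mathbb{D}_d)$ is volume-minimal among enclosing elliptical $d$-cones and that its uniform $1/(d-1)$-rescaling is volume-maximal among enclosed ones.

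For the outer bound I would repeat the argument of Theorem~\ref{thm:image} in every slice: by a computation analogous to the swap-operator identity used there, any finite weighted rank-$k$ (Grassmannian) $2$-design is a scalable frame, $\sum_j\lambda_j\kett{P_j^{(k)}-\tfrac kd\openone}\bbra{P_j^{(k)}-\tfrac kd\openone}\propto\openone-\tfrac1d\kett{\openone}\bbra{\openone}$, and such designs exist for all $d$ and $k$ (cf.\ Ref.~\cite{waldron2018introduction}). Hence, by Theorem~2.11 of Ref.~\cite{CKOPW14}, $\mathcal{E}_1^k(\boldsymbol{\rho})=\boldsymbol{\rho}(\mathbb{B}_d^k)$ is the minimum-volume ellipsoid circumscribing $\boldsymbol{\rho}(\mathbb{E}_k)$. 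Any elliptical $d$-cone enclosing $\boldsymbol{\rho}(\mathbb{E})$ must, on the trace-$k$ hyperplane, enclose $\boldsymbol{\rho}(\mathbb{E}_k)$ and therefore --- once aligned with $\boldsymbol{\rho}(\mathbb{D}_d)$, see below --- contain $\mathcal{E}_1^k(\boldsymbol{\rho})$; taking convex hulls over $k$ it contains $\conv\cup_k\mathcal{E}_1^k(\boldsymbol{\rho})=\boldsymbol{\rho}(\mathbb{D}_d)$, which is thus minimal.

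For the inner bound I would use that the effects above are the closest to $\tfrac kd\openone$ among the boundary of $\mathbb{E}_k$, together with the identity
\[
\frac{k}{d-1}\bigl(\openone-\ket{\phi}\!\!\bra{\phi}\bigr)-\frac kd\openone \;=\; -\frac{k}{d-1}\Bigl(\ket{\phi}\!\!\bra{\phi}-\frac{\openone}d\Bigr),
\]
which shows that these closest boundary effects carry a finite scalable frame iff the pure states do --- which they do, by Theorem~\ref{thm:image}. Applying Theorem~[J] of Ref.~\cite{Ball:1992aa} in each slice, and gluing the slices, then forces the volume-maximal enclosed elliptical $d$-cone to be similar to $\boldsymbol{\rho}(\mathbb{D}_d)$, i.e.\ a uniform rescaling $s\,\boldsymbol{\rho}(\mathbb{D}_d)$. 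By convexity of $\boldsymbol{\rho}(\mathbb{E})$, such a rescaling is enclosed iff the ball of radius $sR_k$ about $\tfrac kd\mathbf{u}$ lies in $\boldsymbol{\rho}(\mathbb{E}_k)$ for all $k$, i.e.\ $sR_k\le\min(k,d-k)/\sqrt{d(d-1)}$, giving $s\le\min_{1\le k\le d-1}\sqrt{\min(k,d-k)^2/\bigl((d-1)\,k(d-k)\bigr)}=\tfrac1{d-1}$, the minimum being attained exactly at $k=1$ and $k=d-1$ --- the two slices already treated in Theorem~\ref{thm:image}. Hence $s=\tfrac1{d-1}$ is optimal and the maximal enclosed elliptical $d$-cone is $\tfrac1{d-1}\boldsymbol{\rho}(\mathbb{D}_d)=\conv\cup_k\mathcal{E}_{d-1}^k(\boldsymbol{\rho})$.

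The step I expect to be the main obstacle is the gluing: justifying that a volume-extremal elliptical $d$-cone enclosing or enclosed in $\boldsymbol{\rho}(\mathbb{E})$ must be \emph{aligned} with $\boldsymbol{\rho}(\mathbb{D}_d)$ --- same centre line $\{t\,\mathbf{u}\}$, parallel slicing hyperplanes, and a common cross-sectional ellipsoidal shape --- so that the optimisation really decouples into the per-slice John/Löwner problems above; this is the $d$-cone analogue of the uniqueness part of John's theorem, and it is where the scalable-frame structure carried by Refs.~\cite{CKOPW14,Ball:1992aa} is essential. If $\boldsymbol{\rho}$ is not informationally complete it is merely non-injective, which can only collapse the $\mathcal{E}_r^k(\boldsymbol{\rho})$ onto a proper subspace; the inclusions $\conv\cup_k\mathcal{E}_{d-1}^k(\boldsymbol{\rho})\subseteq\boldsymbol{\rho}(\mathbb{E})\subseteq\conv\cup_k\mathcal{E}_1^k(\boldsymbol{\rho})$ persist while volume-extremality need not, matching the remark after the theorem.
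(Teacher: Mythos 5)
Your proposal takes essentially the same route as the paper's proof: slice the effect set by trace, note the extreme points are rank-$k$ projectors, invoke scalable frames (weighted $2$-designs) of trace-$k$ projectors together with Theorem~2.11 of Ref.~\cite{CKOPW14} and Theorem~[J] of Ref.~\cite{Ball:1992aa} in each slice, and reassemble via Lemma~\ref{lmm:implicitization2}; your cross-slice radius bookkeeping (with the binding constraint at $k=1$ and $k=d-1$) is in fact more explicit than the paper's. The ``gluing/alignment'' step you flag as the main obstacle is likewise left implicit in the paper, whose proof simply states that it proceeds along the lines of Theorem~\ref{thm:image}.
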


\begin{proof}
  An effect $0 \le \pi \le \openone$ is extremal if and only
  if it  is a projector.   Hence, the set  $\mathbb{E}_d$ of
  effects is the convex hull of projectors, that is
  \begin{align*}
    \mathbb{E}_d  =  \conv   \cup_{k=0}^d  \left\{  \pi  \in
    \mathcal{L} \left( \mathbb{C}^d \right) \Big| \Tr[\pi] =
    k, \pi^2 = \pi \right\}.
  \end{align*}

  The  proof  proceeds  along  the lines  of  the  proof  of
  Theorem~\ref{thm:image}.  First,  due to Sections  6.9 and
  6.11   of  Ref.~\cite{waldron2018introduction},   for  any
  dimension  there   exists  a  finite  scalable   frame  of
  $k$-trace  projectors.   Then,   due  to  Theorem~2.11  of
  Ref.~\cite{CKOPW14},   the    minimum   volume   ellipsoid
  enclosing    $k$-trace    projectors     is    the    ball
  $\mathbb{B}_k^d$.
\end{proof}

\section{Conclusion and outlook}

In this paper we provided an implicit outer approximation of
the image  of any  given quantum  measurement in  any finite
dimension,       thus       generalizing      a       recent
result~\cite{xu2023bounding} by Xu, Schwonnek, and Winter on
the image of Pauli strings.  The outer approximation that we
constructed   is  \emph{minimal}   among   all  such   outer
approximations,  and  \emph{close},  in the  sense  that  it
becomes  the  \emph{maximal  inner} approximation  up  to  a
constant scaling  factor. We also obtained  a similar result
for the dual  problem of implicitizing the image  of the set
of effects through a family  of quantum states.  Finally, we
applied  our approximation  formulas to  characterize, in  a
semi-device independent  way, the  ability to  transform one
quantum measurement into another, or one quantum statistical
model into another.

\section{Acknowledgments}

M.~D.  is grateful to Anna Szymusiak for insightful comments
and  suggestions.  M.~D.    acknowledges  support  from  the
Department  of Computer  Science and  Engineering, Toyohashi
University  of Technology,  from the  International Research
Unit of Quantum Information,  Kyoto University, and from the
JSPS KAKENHI grant number  JP20K03774.  F.  B.  acknowledges
support from MEXT Quantum Leap Flagship Program (MEXT QLEAP)
Grant No.  JPMXS0120319794;  from MEXT-JSPS Grant-in-Aid for
Transformative  Research Areas  (A) “Extreme  Universe”, No.
21H05183;  from JSPS  KAKENHI Grants  No.  20K03746  and No.
23K03230.

\bibliographystyle{unsrturl}
\bibliography{library}
\end{document}